\newcommand{\splitSP}{{\normalfont\texttt{Split\-Sep\-a\-ra\-tion}}\-{\normalfont\texttt{Pair}}\xspace}
\newcommand{\joinSP}{{\normalfont\texttt{Join\-Sep\-a\-ra\-tion}}\-{\normalfont\texttt{Pair}}\xspace}
\newcommand{\isolV}{{\normalfont\texttt{Iso\-late\-Ver\-tex}}\xspace}
\newcommand{\integV}{{\normalfont\texttt{In\-te\-grate}}\xspace}
\newcommand{\insertG}{{\normalfont\texttt{In\-sert\-Graph}}\xspace}
\newcommand{\insertGSPQR}{{\normalfont\ensuremath{\texttt{In\-sert}}\-\ensuremath{\texttt{Graph}_\texttt{SPQR}}}\xspace}
\newcommand{\mergeSPQR}{{\normalfont\ensuremath{\texttt{Merge}_\texttt{SPQR}}}\xspace}
\newcommand{\replace}[4]{\ensuremath{#1[#3\rightarrow_{#4}#2]}}
\newcommand{\skeldec}{skeleton decomposition\xspace}
\newcommand{\skeldecVars}[1][]{\ensuremath{\mathcal S#1=(\mathcal G#1,\allowbreak \origV#1,\allowbreak \origE#1,\allowbreak \twinE#1)}\xspace}
\newcommand{\eskeldec}{extended skeleton decomposition\xspace}
\newcommand{\eskeldecVars}[1][]{\ensuremath{\mathcal S#1=(\mathcal G#1,\allowbreak \origV#1,\allowbreak \origE#1,\allowbreak \twinE#1,\allowbreak \twinV#1)}\xspace}
\newcommand{\cplan}{{\normalfont\textsc{Clus}\-\textsc{tered} \textsc{Pla}\-\textsc{nari}\-\textsc{ty}}\xspace}
\newcommand{\pqplan}{{\normalfont\textsc{Syn}\-\textsc{chro}\-\textsc{nized} \textsc{Pla}\-\textsc{nari}\-\textsc{ty}}\xspace}
\newcommand{\contract}{{\normalfont\texttt{En}\-\texttt{cap}\-\texttt{su}\-\texttt{late}\-\texttt{And}\-\texttt{Join}}\xspace}
\newcommand{\propagate}{{\normalfont\texttt{Prop}\-\texttt{a}\-\texttt{gatePQ}}\xspace}
\newcommand{\simplify}{{\normalfont\texttt{Sim}\-\texttt{pli}\-\texttt{fy}\-\texttt{Match}\-\texttt{ing}}\xspace}
\DeclareMathOperator{\origV}{origV}
\DeclareMathOperator{\origE}{origE}
\DeclareMathOperator{\twinE}{twinE}
\DeclareMathOperator{\twinV}{twinV}
\newcommand{\cupdot}{\mathbin{\mathaccent\cdot\cup}}
\NewDocumentCommand\namedlabel{omm}{%
  \begingroup%
  #3\def\@currentlabel{#3}%
  \phantomsection%
  \IfNoValueTF{#1}
  {\label{#2}}
  {\label[#1]{#2}}%
  \endgroup%
}
\NewDocumentCommand\nameditem{omm}{%
  \item[%
      \IfNoValueTF{#1}
      {\namedlabel{#2}{#3}}
      {\namedlabel[#1]{#2}{#3}}%
    ]%
}
\newcommand\cond[1]{\textsf{\ref{cond:#1}}}
\crefname{cond}{condition}{conditions}
\crefname{step}{step}{steps}
  \newcommand{\proofomitted}{}
  \newcommand{\proofomitted}{$\ast$}
  \let\origrestatable=\restatable
  \def\restatable{\origrestatable[\proofomitted]}
\title{Maintaining Triconnected Components under Node Expansion\thanks{Funded by DFG-grant RU-1903/3-1.}}
\author{
  Simon D. Fink\inst{1}\orcidID{0000-0002-2754-1195} \and
  Ignaz Rutter\inst{2}\orcidID{0000-0002-3794-4406}
}
\institute{
  Faculty of Informatics and Mathematics, University of Passau, Germany
  \email{\{finksim,rutter\}@fim.uni-passau.de}
}
\authorrunning{S.\,D. Fink and I. Rutter}
\title{Maintaining Triconnected Components under Node Expansion}
\author{Simon D. Fink}{Faculty of Informatics and Mathematics, University of Passau, Germany}{finksim@fim.uni-passau.de}{https://orcid.org/0000-0002-2754-1195}{}
\author{Ignaz Rutter}{Faculty of Informatics and Mathematics, University of Passau, Germany}{rutter@fim.uni-passau.de}{https://orcid.org/0000-0002-3794-4406}{}
\authorrunning{S.\,D. Fink and I. Rutter} 
\keywords{SPQR-Tree, Dynamic Algorithm, Cluster Planarity} 
\begin{document}

\maketitle

\begin{abstract}
SPQR-trees are a central component of graph drawing and are also important in many further areas of computer science.
From their inception onwards, they have always had a strong relation to dynamic algorithms maintaining information, e.g., on planarity and triconnectivity, under edge insertion and, later on, also deletion.
In this paper, we focus on a special kind of dynamic update, the expansion of vertices into arbitrary biconnected graphs, while maintaining the SPQR-tree and further information.
This will also allow us to efficiently merge two SPQR-trees by identifying the edges incident to two vertices with each other.
We do this working along an axiomatic definition lifting the SPQR-tree to a stand-alone data structure that can be modified independently from the graph it might have been derived from.
Making changes to this structure, we can now observe how the graph represented by the SPQR-tree changes, instead of having to reason which updates to the SPQR-tree are necessary after a change to the represented graph.

Using efficient expansions and merges allows us to improve the runtime of the \pqplan algorithm by Bläsius et al.~\cite{Blaesius2021} from $O(m^2)$ to $O(m\cdot \Delta)$, where $\Delta$ is the maximum pipe degree.
This also reduces the time for solving several constrained planarity problems, e.g. for \cplan from $O((n+d)^2)$ to $O(n+d\cdot \Delta)$, where $d$ is the total number of crossings between cluster borders and edges and $\Delta$ is the maximum number of edge crossings on a single cluster border.%

\ifthenelse{\boolean{llncs}}{\keywords{SPQR-Tree \and Dynamic Algorithm \and Cluster Planarity}}{}
\end{abstract}

\newpage
\setcounter{page}{1}

\section{Introduction}
The SPQR-tree is a data structure that represents the decomposition of a graph at its \emph{separation pairs}, that is the pairs of vertices whose removal disconnects the graph.
The components obtained by this decomposition are called \emph{skeletons}.
SPQR-trees form a central component of many graph visualization techniques and are used for, e.g., planarity testing and variations thereof~\cite{Brueckner2019,Battista1996p,Gutwenger2002,Holm2020p,Poutre1994} and for computing embeddings and layouts~\cite{Angelini2009,Bienstock1990,Blaesius2016a,Didimo2020,Gutwenger2010,Weiskircher2002}; see~\cite{Mutzel2003} for a survey of graph drawing applications.
Outside of graph visualization they are used in the context of, e.g., minimum spanning trees~\cite{Bienstock1989,Battista1990}, triangulations~\cite{Biedl1997}, and crossing optimization~\cite{Gutwenger2010,Weiskircher2002}.
They also have multiple applications outside of graph theory and even computer science, e.g. for creating integrated circuits~\cite{Chen1999,Zhang2013}, business processes modelling~\cite{Vanhatalo2009}, electrical engineering~\cite{Franken2005}, theoretical physics~\cite{Manteuffel2012} and genomics~\cite{Fedarko2017}.

Initially, SPQR-trees were devised by Di Battista and Tamassia for incremental planarity testing~\cite{Battista1989,Battista1996p}.
As such, even in their initial form, SPQR-trees already allowed dynamic updates in the form of edge addition.
Their use was quickly expanded to other on-line problems~\cite{Battista1996,Battista1990}.
In addition to the applications mentioned above, this also sparked a series of further papers improving the runtime of the incremental data structure~\cite{Poutre1992,Poutre1994,Westbrook1992} and also extending it to be fully-dynamic, i.e., allowing insertion and deletion of vertices and edges, in $O(\sqrt{n})$ time~\cite{Eppstein1996,Galil1999}, where $n$ is the number of vertices in the graph.
Recently, Holm and Rotenberg described a fully-dynamic algorithm for maintaining planarity and triconnectivity information in $O(\log^3 n)$ time per operation \cite{Holm2020p,Holm2020} (see also there for a short history on dynamic SPQR-tree algorithms).

In this paper, we consider an incremental setting where we allow a single operation that expands a vertex $v$ into an arbitrary biconnected graph $G_\nu$.
Using the approach of Holm and Rotenberg~\cite{Holm2020p}, this takes $O((\deg(v)+|G_\nu|) \cdot \log^3 n)$ time by first removing $v$ and its incident edges and then incrementally inserting $G_\nu$.
We improve this to $O(\deg(v)+|G_\nu|)$ using an algorithm that is much simpler and thus also more likely to improve performance in practice.
In addition, our approach also allows to efficiently merge two SPQR-trees as follows.
Given two biconnected graphs $G_1, G_2$ containing vertices $v_1,v_2$, respectively, together with a bijection between their incident edges, we construct a new graph $G$ by replacing $v_1$ with $G_2-v_2$ in $G_1$, identifying edges using the given bijection.
Given the SPQR-trees of $G_1$ and $G_2$, we show that the SPQR-tree of $G$ can be found in $O(\deg(v_1))$ time.
More specifically, we present a data structure that supports the following operations:
$\insertGSPQR$ expands a single vertex in time linear in the size of the expanded subgraph,
$\mergeSPQR$ merges two SPQR-trees in time linear in the degree of the replaced vertices,
$\texttt{IsPlanar}$ indicates whether the currently represented graph is planar in constant time, and
$\texttt{Rotation}$ yields one of the two possible planar rotations of a vertex in a triconnected skeleton in constant time.
Furthermore, our data structure can be adapted to yield consistent planar embeddings for all triconnected skeletons and to test for the existence of three distinct paths between two arbitrary vertices with an additional factor of $\alpha(n)$ for all operations, where $\alpha$ is the inverse Ackermann function.

\begin{table}[t]
  \centering
  \renewcommand{\arraystretch}{1.4}
  \setlength{\tabcolsep}{4pt}
  \begin{tabular}{p{4cm}|c|c|c}
    \multirow{2}{*}{Problem}                  & \multicolumn{3}{c}{Running Times}                                                     \\
                                              & before \cite{Blaesius2021}              & using \cite{Blaesius2021} & with this paper \\ \hline
    Atomic Embeddability / \mbox{\pqplan}     & $O(m^8)$ \cite{Fulek2019}               & $O(m^2)$                  & $O(m\cdot\Delta)$   \\ \hline
    ClusterPlanarity                          & $O((n+d)^8)$ \cite{Fulek2019}           & $O((n+d)^2)$              & $O(n+d\cdot\Delta)$ \\ \hline
    Connected SEFE                            & $O(n^{16})$ \cite{Fulek2019}            & $O(n^2)$                  & $O(n\cdot\Delta)$   \\
                                              & bicon: $O(n^2)$ \cite{Blaesius2011}     &                           &                 \\ \hline
    Partially PQ-Constrained Planarity        & bicon: $O(m)$ \cite{Blaesius2011}       & $O(m^2)$                  & $O(m\cdot\Delta)$   \\ \hline
    Row-Column Independent NodeTrix Planarity & bicon: $O(n^2)$ \cite{Liotta2020}       & $O(n^2)$                  & $O(n\cdot\Delta)$   \\ \hline
    Strip Planarity                           & $O(n^8)$ \cite{Angelini2016a,Fulek2019} & $O(n^2)$                  & $O(n\cdot\Delta)$   \\
                                              & fixed emb: poly \cite{Angelini2016a}    &                           &
  \end{tabular}
  \caption{
    The best known running times for various constrained planarity problems before \pqplan~\cite{Blaesius2021} was published;
    using it as described in~\cite{Blaesius2021}; and using it together with the speed-up from this paper.
    Running times prefixed with ``bicon'' only apply for certain problem instances which expose some form of biconnectivity.
    The variables $n$ and $m$ refer to the number of vertices and edges of the problem instance, respectively.
    The variable $d$ refers to the number of edge-cluster boundary crossings in \cplan instances,
    while $\Delta$ refers to the maximum pipe degree in the corresponding \pqplan instances.
    This is bounded by the maximum number of edges crossing a single cluster border or the maximum vertex degree in the input instance, depending on the problem.
  }
  \label{tab:constplanprobs}
\end{table}

The main idea of our approach is that the subtree of the SPQR-tree affected by expanding a vertex $v$ has size linear in the degree of $v$, but may contain arbitrarily large skeletons.
In a ``non-normalized'' version of an SPQR-tree, the affected cycle (`S') skeletons can easily be split to have a constant size, while we develop a custom splitting operation to limit the size of triconnected `R' skeletons.
This limits the size of the affected structure to be linear in the degree of $v$ and allows us to perform the expansion efficiently.

In addition to the description of this data structure, the technical contribution of this paper is twofold:
First, we develop an axiomatic definition of the decomposition at separation pairs, putting the SPQR-tree as ``mechanical'' data structure into focus instead of relying on and working along a given graph structure.
As a result, we can deduce the represented graph from the data structure instead of computing the data structure from the graph.
This allows us to make more or less arbitrary changes to the data structure (respecting its consistency criteria) and observe how the graph changes, instead of having to reason which changes to the graph require which updates to the data structure.

Second, we explain how our data structure can be used to improve the runtime of the algorithm by Bläsius et al.~\cite{Blaesius2021} for solving \pqplan from $O(m^2)$ to $O(m\cdot \Delta)$, where $\Delta$ is the maximum pipe degree (i.e. the maximum degree of a vertex with synchronization constraints that enforce its rotation to be the same as that of another vertex).
\pqplan can be used to model and solve a vast class of different kinds of constrained planarity, see \Cref{tab:constplanprobs} for an overview of problems benefiting from this speedup.
Among them is the notorious \cplan, whose complexity was open for 30 years before Fulek and Tóth gave an algorithm with runtime $O((n+d)^8)$ in 2019~\cite{Fulek2019}, where $d$ is the total number of crossings between cluster borders and edges.
Shortly thereafter, Bläsius et al.~\cite{Blaesius2021} gave a solution in $O((n+d)^2)$ time.
We improve this to $O(n+d\cdot \Delta)$, where $\Delta$ is the maximum number of edge crossings on a single cluster border.

This work is structured as follows.
\Cref{sec:prelim} contains an overview of the definitions used in this work.
In \Cref{sec:skeldec}, we describe the \skeldec and show how it relates to the SPQR-tree.
\Cref{sec:eskeldec} extends this data structure by the capability of splitting triconnected components.
In \Cref{sec:dyn-eskeldec}, we exploit this feature to ensure the affected part of the SPQR-tree is small when we replace a vertex with a new graph.
\ifthenelse{\boolean{long}}{%
\Cref{sec:application} contains more details on the background of \textsc{Syn}\-\textsc{chro}\-\textsc{nized} and \cplan and shows how our results can be used to reduce the time required for solving them.
}{%
\Cref{sec:application} shows how this can be used to reduce the time required for solving \textsc{Syn}\-\textsc{chro}\-\textsc{nized} and \cplan.
Due to space constraints, proofs of statements marked with a star are only given in the appendix.
\Cref{sec:background} contains more details on the background of \textsc{Syn}\-\textsc{chro}\-\textsc{nized} and \cplan together with a discussion of our results.
}

\section{Preliminaries}\label{sec:prelim}

In the context of this work, $G=(V,E)$ is a (usually biconnected and loop-free) multi-graph with $n$ vertices $V$ and $m$ (possibly parallel) edges $E$.
For a vertex $v$, we denote its open neighborhood (excluding $v$ itself) by $N(v)$.
For a bijection or matching $\phi$ we call $\phi(x)$ the \emph{partner} of an element $x$.
We use $A \cupdot B$ to denote the union of two disjoint sets $A, B$.

A separating $k$-set is a set of $k$ vertices whose removal increases the number of connected components.
Separating 1-sets are called \emph{cutvertices}, while separating 2-sets are called \emph{separation pairs}.
A connected graph is \emph{biconnected} if it does not have a cutvertex.
A biconnected graph is \emph{triconnected} if it does not have a separation pair.
Maximal biconnected subgraphs are called \emph{blocks}.
Each separation pair divides the graph into \emph{bridges}, the maximal subgraphs which cannot be disconnected by removing or splitting the vertices of the separation pair.
%
A \emph{bond} is a graph that consists solely of two \emph{pole} vertices connected by multiple parallel edges, a \emph{polygon} is a simple cycle, while a \emph{rigid} is any simple triconnected graph.
A \emph{wheel} is a cycle with an additional central vertex connected to all other vertices.

Finally, the \emph{expansion} that is central to this work is formally defined as follows.
Let $G_\alpha, G_\beta$ be two graphs where $G_\alpha$ contains a vertex $u$ and $G_\beta$ contains $|N(u)|$ marked vertices, together with a bijection $\phi$ between the neighbors of $u$ and the marked vertices in $G_\beta$.
With \replace{G_\alpha}{G_\beta}{u}{\phi} we denote the graph that is obtained from the disjoint union of $G_\alpha,G_\beta$ by identifying each neighbor $x$ of $u$ with its respective marked vertex $\phi(x)$ in $G_\beta$ and removing $u$, i.e. the graph $G_\alpha$ where the vertex $u$ was expanded into $G_\beta$.%

\section{Skeleton Decompositions}\label{sec:skeldec}
A \emph{skeleton structure} \skeldecVars that \emph{represents} a graph $G_{\mathcal S}=(V,E)$ consists of a set $\mathcal G$ of disjoint \emph{skeleton} graphs together with three total, surjective mappings $\twinE, \origE$, and $\origV$ that satisfy the following conditions:
\begin{itemize}
\item Each skeleton $G_\mu=(V_\mu, E_\mu^\mathrm{real} \cupdot E_\mu^\mathrm{virt})$ in $\mathcal G$ is a multi-graph where each edge is either in $E_\mu^\mathrm{real}$ and thus called \emph{real} or in $E_\mu^\mathrm{virt}$ and thus called \emph{virtual}.
\item Bijection $\twinE : E^\mathrm{virt} \rightarrow E^\mathrm{virt}$ matches all virtual edges $E^\mathrm{virt} = \bigcup_\mu E_\mu^\mathrm{virt}$ such that $\twinE(e)\neq e$ and $\twinE^2=\text{id}$.
\item Surjection $\origV : \bigcup_\mu V_\mu \rightarrow V$ maps all skeleton vertices to graph vertices.
\item Bijection $\origE : \bigcup_\mu E_\mu^\mathrm{real} \rightarrow E$ maps all real edges to the graph edge set $E$.
\end{itemize}
Note that each vertex and each edge of each skeleton is in the domain of exactly one of the three mappings.
As the mappings are surjective, $V$ and $E$ are exactly the images of $\origV$ and $\origE$.
For each vertex $v\in G_{\mathcal S}$, the skeletons that contain an \emph{allocation vertex} $v'$ with $\origV(v')=v$ are called the \emph{allocation skeletons} of $v$.
Furthermore, let $T_{\mathcal S}$ be the graph where each node $\mu$ corresponds to a skeleton $G_\mu$ of $\mathcal G$.
Two nodes of $T_{\mathcal S}$ are adjacent if their skeletons contain a pair of virtual edges matched with each other.

We call a skeleton structure a \emph{\skeldec} if it satisfies the following conditions:
\begin{description}
\nameditem[cond]{cond:bicon}{1~(bicon)}
  Each skeleton is biconnected.
\nameditem[cond]{cond:tree}{2~(tree)}
  Graph $T_{\mathcal S}$ is simple, loop-free, connected and acyclic, i.e., a tree.
\nameditem[cond]{cond:orig-inj}{3~(orig-inj)}
  For each skeleton $G_\mu$, the restriction $\origV | _{V_\mu}$ is injective.
\nameditem[cond]{cond:orig-real}{4~(orig-real)}
  For each real edge $uv$, the endpoints of $\origE(uv)$ are $\origV(u)$ and $\origV(v)$.
\nameditem[cond]{cond:orig-virt}{5~(orig-virt)}
  Let $uv$ and $u'v'$ be two virtual edges with $uv=\twinE(u'v')$.
  For their respective skeletons $G_\mu$ and $G_\mu'$ (where $\mu$ and $\mu'$ are adjacent in $T_{\mathcal S}$), it is
  $\origV(V_\mu) \cap \origV(V_{\mu'}) = \origV(\{u,v\})=\origV(\{u',v'\})$.
\nameditem[cond]{cond:subgraph}{6~(subgraph)}
  The allocation skeletons of any vertex of $G_{\mathcal S}$ form a connected subgraph of~$T_{\mathcal S}$.
\end{description}
\Cref{fig:skeldec} shows an example of $\mathcal S$, $G_{\mathcal S}$, and $T_{\mathcal S}$.
We call a \skeldec with only one skeleton $G_\mu$ \emph{trivial}.
Note that in this case, $G_\mu$ is isomorphic to $G_{\mathcal S}$, and $\origE$ and $\origV$ are actually bijections between the edges and vertices of both graphs.

\begin{figure}[t]
  \centering
  \begin{subfigure}[t]{\linewidth}
    \includegraphics[page=7,width=\linewidth]{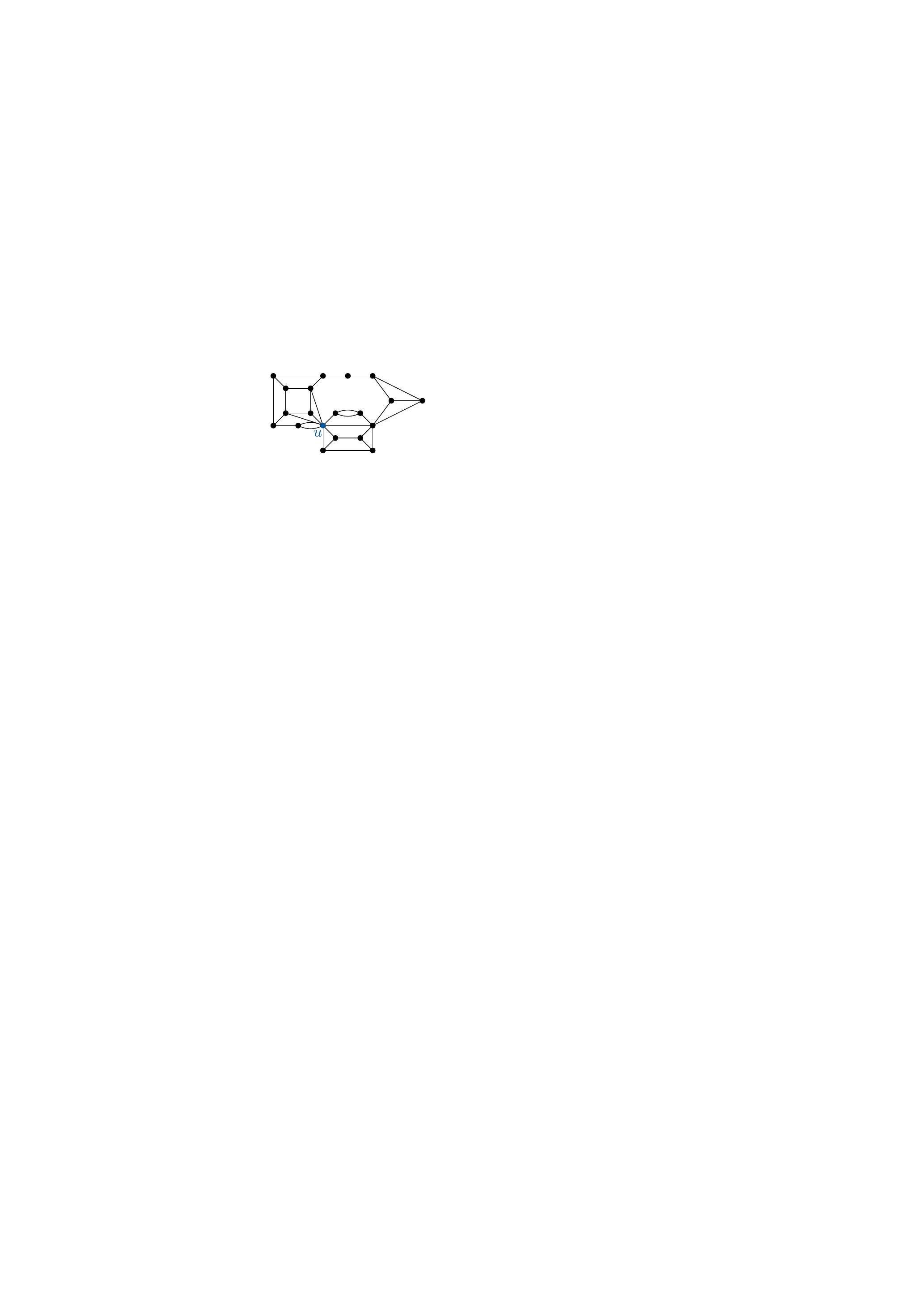}
    \phantomsubcaption\label{fig:skeldec-graph}
    \phantomsubcaption\label{fig:skeldec-skeletons}
    \phantomsubcaption\label{fig:skeldec-tree}
    \phantomsubcaption\label{fig:embedding-tree}
  \end{subfigure}


  \caption{
    Different views on the \skeldec $\mathcal S$.
    \textbf{(a)} The graph $G_{\mathcal S}$ with a vertex $u$ marked in blue.
    \textbf{(b)} The skeletons of $\mathcal G$.
      Virtual edges are drawn in gray with their matching $\twinE$ being shown in orange.
      The allocation vertices of $u$ are marked in blue.
    \textbf{(c)} The tree $T_{\mathcal S}$. 
      The allocation skeletons of $u$ are marked in blue.
    \textbf{(d)} The embedding tree of vertex $u$ as described in \Cref{sec:embed-trees}.
      P-nodes are shown as white disks, Q-nodes are shown as large rectangles.
      The leaves of the embedding tree correspond to the edges incident to $u$.
  }
  \label{fig:skeldec}
\end{figure}

To model the decomposition into triconnected components, we define the operations \splitSP and its converse, \joinSP, on a \skeldec \skeldecVars.
  For $\splitSP$, let $u,v$ be a separation pair of skeleton $G_\mu$ and let $(A,B)$ be a non-trivial bipartition of the bridges between $u$ and $v$.%
  \footnote{Note that a bridge might consist out of a single edge between $u$ and $v$ and that each bridge includes the vertices $u$ and $v$.}
  Applying $\splitSP(\mathcal S, (u, v)$, $(A, B))$ yields a \skeldec \skeldecVars['] as follows.
  In $\mathcal G'$, we replace $G_\mu$ by two skeletons $G_\alpha, G_\beta$, where $G_\alpha$ is obtained from $G_\mu[A]$ by adding a new virtual edge $e_\alpha$ between $u$ and $v$.
  The same respectively applies to $G_\beta$ with $G_\mu[B]$ and $e_\beta$.
  We set $\twinE'(e_\alpha)=e_\beta$ and $\twinE'(e_\beta)=e_\alpha$.
  Note that $\origV$ maps the endpoints of $e_\alpha$ and $e_\beta$ to the same vertices.
  All other skeletons and the mappings defined on them remain unchanged.

  For $\joinSP$, consider virtual edges $e_\alpha,e_\beta$ with $\twinE(e_\alpha)=e_\beta$ and let $G_\beta\neq G_\alpha$ be their respective skeletons.
  Applying $\joinSP(\mathcal S, e_\alpha)$ yields a \skeldec \skeldecVars['] as follows.
  In $\mathcal G'$, we merge $G_\alpha$ with $G_\beta$ to form a new skeleton $G_\mu$ by identifying the endpoints of $e_\alpha$ and $e_\beta$ that map to the same vertex of $G_\mathcal{S}$.
  Additionally, we remove $e_\alpha$ and $e_\beta$.
  All other skeletons and the mappings defined on them remain unchanged.

The main feature of both operations is that they leave the graph represented by the \skeldec unaffected while splitting a node or contracting and edge in $T_{\mathcal S}$, which can be verified by checking the individual conditions.

\begin{restatable}{lemma}{lemSplitJoin}
Applying \splitSP or \joinSP on a \skeldec \skeldecVars yields a \skeldec $\mathcal S'=(\mathcal G'$, $\origV'$, $\origE'$, $\twinE')$ with an unchanged represented graph $G_\mathcal{S'}=G_{\mathcal S}$.
\end{restatable}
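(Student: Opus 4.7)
The plan is to verify each of the six defining conditions of a \skeldec for the output of both operations, and then verify that the represented graph is unchanged. Because \splitSP and \joinSP are mutual inverses (joining along $e_\alpha, e_\beta$ undoes the split, and splitting along the shared separation pair of two joined skeletons undoes the join), it suffices to check \splitSP in detail and note that \joinSP follows by symmetric reasoning.

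For \splitSP, I would dispatch the easier conditions first. Condition~\cond{bicon} follows because $(u,v)$ is a separation pair of $G_\mu$, so each bridge is already $2$-connected to $\{u,v\}$; adding the virtual edge $uv$ to $G_\mu[A]$ (resp. $G_\mu[B]$) therefore yields a biconnected skeleton. Condition~\cond{tree} holds because $T_{\mathcal S'}$ is obtained from $T_{\mathcal S}$ by replacing $\mu$ with two adjacent nodes $\alpha,\beta$, where each former neighbor of $\mu$ attaches to $\alpha$ or $\beta$ according to whether its matched virtual edge in $G_\mu$ belongs to $A$ or $B$; subdividing a node of a tree yields a tree. Conditions~\cond{orig-inj} and~\cond{orig-real} are inherited immediately: $V_\alpha, V_\beta \subseteq V_\mu$, and the restrictions of $\origV'$ and $\origE'$ on real edges and on vertices coming from other skeletons are unchanged. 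For \cond{orig-virt}, the only new matched pair is $\twinE'(e_\alpha)=e_\beta$, and by construction both endpoints of $e_\alpha$ and $e_\beta$ map under $\origV$ to $\origV(u)$ and $\origV(v)$; that $\origV(V_\alpha) \cap \origV(V_\beta) = \{\origV(u), \origV(v)\}$ follows from \cond{orig-inj} for $G_\mu$ and the fact that the bipartition is over bridges, which share only $u$ and $v$.

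The main obstacle is \cond{subgraph}. For a vertex $w$ of $G_{\mathcal S}$, let $T_w \subseteq T_{\mathcal S}$ be its (connected) set of allocation skeletons. If $w \notin \origV(\{u,v\})$, then the unique allocation vertex $w' \in V_\mu$ (by \cond{orig-inj} for $G_\mu$) belongs to exactly one bridge, say in $A$, so $w'$ appears only in $V_\alpha$. For any neighbor $\nu$ of $\mu$ in $T_w$ via matched virtual edges $(e, e')$, \cond{orig-virt} applied to $T_{\mathcal S}$ forces $w$ to lie in $\origV$ of the endpoints of $e$, and by injectivity $w'$ is itself an endpoint of $e$; therefore $e$ lies in the bridge of $w'$, i.e., in $A$, and so $\nu$ becomes a neighbor of $\alpha$ in $T_{\mathcal S'}$. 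Hence $T_w$ remains connected (rooted at $\alpha$). If $w \in \origV(\{u,v\})$, then $w$ has allocation vertices in both $V_\alpha$ and $V_\beta$, but $\alpha$ and $\beta$ are adjacent in $T_{\mathcal S'}$, and every former neighbor in $T_w$ attaches to one of them; connectivity of $T_w$ is preserved.

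Finally, the represented graph is unchanged because $\origE'$ agrees with $\origE$ on all real edges (no real edges are created or destroyed) and $\origV'$ has the same image as $\origV$ (the new virtual-edge endpoints map to vertices already in the image), and \cond{orig-real} guarantees that edge endpoints in $G_{\mathcal S'}$ coincide with those in $G_{\mathcal S}$. The argument for \joinSP is entirely analogous: conditions \cond{bicon}--\cond{orig-real} follow by direct inspection, \cond{orig-virt} holds vacuously for the deleted edges and is inherited for the rest, and \cond{subgraph} is preserved because contracting the tree edge $\alpha\beta$ into $\mu$ cannot disconnect any existing allocation subtree.
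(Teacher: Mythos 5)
Your treatment of \splitSP follows the same structure and reasoning as the paper's proof: biconnectivity from bridges plus the new virtual edge, \cond{tree} from node splitting, \cond{orig-virt} from injectivity of $\origV$ on $V_\mu$ and the fact that bridges only share $u,v$, and \cond{subgraph} by chasing a would-be disconnecting neighbor through \cond{orig-virt} of the input. That part is sound and matches the paper.

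The \joinSP half is where the argument thins out. The opening claim that ``it suffices to check \splitSP in detail and note that \joinSP follows by symmetric reasoning'' does not hold up: what you need is that \joinSP applied to an \emph{arbitrary} \skeldec yields a \skeldec, and there is no a priori reason the input arises as the result of a prior \splitSP, so you cannot simply invert the split direction. The subsequent sketch also elides exactly the two subtle points. First, \cond{orig-inj} for the merged skeleton is not ``direct inspection'': after identifying the endpoints of $e_\alpha$ and $e_\beta$, one must argue that no \emph{other} pair of vertices of $V_\alpha$ and $V_\beta$ maps to a common vertex. This requires invoking \cond{orig-virt} of the input, which tells you that $\origV(V_\alpha)\cap\origV(V_\beta)$ is exactly the image of the two identified pairs. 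Second, \cond{orig-virt} for the output is not merely ``inherited for the rest'': for a neighbor $\nu$ of, say, $\alpha$, the merged skeleton's vertex image is $\origV(V_\alpha)\cup\origV(V_\beta)$, and one must show this does not enlarge the intersection with $\origV(V_\nu)$. The paper obtains this from \cond{subgraph} of the input (allocation skeletons form a connected subtree, so $V_\beta$ cannot contribute new shared vertices with $\nu$). Both of these steps use a \emph{different} input condition than the one being re-established in the output, which is precisely why the direct argument for \joinSP is not a mirror image of the one for \splitSP and needs to be written out.
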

\newcommand{\proofSplitJoin}{
\begin{proof}
We first check that all conditions still hold in the \skeldec $\mathcal S'$ returned by \splitSP.
As $(A,B)$ is a non-trivial bipartition, each set contains at least one bridge.
Together with $e_\alpha$ (and $e_\beta$), this bridge ensures that $G_\alpha$ (and $G_\beta$) remain biconnected, satisfying condition \cond{bicon}.
The operation splits a node $\mu$ of $T_\mathcal{S}$ into two adjacent nodes $\alpha, \beta$, whose neighbors are defined exactly by the virtual edges in $A, B$, respectively.
Thus, condition \cond{tree} remains satisfied.
The mappings $\origV', \origE'$ and $\twinE'$ obviously still satisfy conditions \cond{orig-inj} and \cond{orig-real}.
We duplicated exactly two nodes, $u$ and $v$ of adjacent skeletons $G_\alpha$ and $G_\beta$.
Because \cond{orig-inj} holds for $G_\mu$, $G_\alpha$ and $G_\beta$ share no other vertices that map to the same vertex of $G_\mathcal{S'}$.
Thus, condition \cond{orig-virt} remains satisfied.

Condition \cond{subgraph} could only be violated if the subgraph of $T_\mathcal{S'}$ formed by the allocation skeletons of some vertex $z \in G_\mathcal{S'}$ was no longer connected.
This could only happen if only one of $G_\alpha$ and $G_\beta$ were an allocation skeleton of $z$, while the other has a further neighbor that is also an allocation skeleton of $z$.
Assume without loss of generality that $G_\alpha$ and the neighbor $G_\nu$ of $G_\beta$, but not $G_\beta$ itself, were allocation skeletons of $z$.
Because $G_\nu$ and $G_\beta$ are adjacent in $T_\mathcal{S'}$ there are virtual edges $xy=\twinE'(x'y')$ with $xy\in G_\beta$ and $x'y'\in G_\nu$.
The same virtual edges are also present in the input instance, only with the difference that $xy\in G_\mu$ and $\mu$ (instead of $\beta$) and $\nu$ are adjacent in $T_\mathcal{S}$.
As the input instance satisfies condition \cond{orig-virt}, it is $z \in \origV(V_\nu) \cap \origV(V_\mu) = \origV(\{x,y\})=\origV(\{x',y'\})$.
As $\origV(\{x,y\})=\origV'(\{x,y\})$, this is a contradiction to $G_\beta$ not being an allocation skeleton of $z$.

Finally, the mapping $\origE$ remains unchanged and the only change to $\origV$ is to include two new vertices mapping to already existing vertices.
Due to condition \cond{orig-real} holding for both the input and the output instance, this cannot affect the represented graph $G_\mathcal{S'}$.

Now consider the \skeldec $\mathcal S'$ returned by \joinSP.
Identifying distinct vertices of distinct connected components does not affect their biconnectivity, thus condition \cond{bicon} remains satisfied.
The operation effectively contracts and removes an edge in $T_\mathcal{S}$, which does not affect $T_\mathcal{S'}$ being a tree satisfying condition \cond{tree}.
Note that condition \cond{tree} holding for the input instance also ensures that $G_\alpha$ and $G_\beta$ are two distinct skeletons.
As the input instance also satisfies condition \cond{orig-virt}, there are exactly two vertices in each of the two adjacent skeletons $G_\alpha$ and $G_\beta$, where $\origV$ maps to the same vertex of $G_\mathcal{S}$.
These two vertices must be part of the $\twinE$ pair making the two skeletons adjacent, thus they are exactly the two pairs of vertices we identify with each other.
Thus, $\origV|_{V_\mu}$ is still injective, satisfying condition \cond{orig-inj}.
As we modify no real edges and no other virtual edges, the mappings $\origV'$ and $\origE'$ obviously still satisfy condition \cond{orig-real}.
As the allocation skeletons of each graph vertex form a connected subgraph, joining two skeletons cannot change the intersection with any of their neighbors, leaving \cond{orig-virt} satisfied.
Finally, contracting a tree edge cannot lead to any of the subgraphs of \cond{subgraph} becoming disconnected, thus the condition also remains satisfied.
Again, no changes were made to $\origE$, while condition \cond{orig-virt} makes sure that $\origV$ mapped the two pairs of merged vertices to the same vertex of $G_{\mathcal S}$.
Thus, the represented graph $G_\mathcal{S'}$ remains unchanged.
\end{proof}
}
\ifthenelse{\boolean{long}}{\proofSplitJoin}{}

\newcommand{\ExhaustiveJoin}{
\begin{lemma}\label{lem:ExhaustiveJoin}
Exhaustively applying \joinSP to a \skeldec \skeldecVars yields a trivial \skeldec \skeldecVars['] where $\origE'$ and $\origV'$ define an isomorphism between $G_\mu'$ and $G_\mathcal{S'}$.
\end{lemma}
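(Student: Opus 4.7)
The plan is to run the obvious termination-and-invariants argument. By the preceding lemma, every application of \joinSP preserves the represented graph and leaves a valid \skeldec, so it suffices to argue that (i) the process terminates, (ii) the terminal structure is trivial, and (iii) in a trivial \skeldec the mappings $\origE'$ and $\origV'$ restrict to an isomorphism onto $G_{\mathcal S'}$.

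For termination, I would observe that each application of \joinSP merges two distinct skeletons into one, decreasing $|\mathcal G|$ by exactly one. Since $|\mathcal G|$ is a nonnegative integer, the process terminates after at most $|\mathcal G|-1$ applications, and on each \skeldec produced along the way all conditions \cond{bicon}--\cond{subgraph} remain satisfied by the previous lemma, so \joinSP is again legal whenever its precondition holds.

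For triviality of the terminal \skeldec \skeldecVars['], I would argue that \joinSP is applicable as long as there exists any virtual edge: given such an edge $e_\alpha$, its partner $e_\beta=\twinE'(e_\alpha)$ lies in a skeleton $G_\beta$ that is distinct from $G_\alpha$, because otherwise $T_{\mathcal S'}$ would contain a loop, contradicting condition \cond{tree}. Hence if \joinSP can no longer be applied, $E^\mathrm{virt}=\emptyset$, so $T_{\mathcal S'}$ has no edges; by condition \cond{tree} it is connected, therefore it consists of a single node, i.e.\ $\mathcal G'=\{G_{\mu'}\}$ and $\mathcal S'$ is trivial.

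Finally, for the isomorphism, I would use that in the trivial case $\origE'$ is a bijection $E_{\mu'}^\mathrm{real}\to E(G_{\mathcal S'})$ by its definition as a bijection on all real edges (there are no virtual edges to exclude), and $\origV'|_{V_{\mu'}}$ is injective by condition \cond{orig-inj} and surjective onto $V(G_{\mathcal S'})$ because $\mu'$ is the only skeleton and $\origV'$ is surjective by definition of a skeleton structure. That this pair of bijections actually preserves incidences is exactly condition \cond{orig-real}: for each real edge $uv\in E_{\mu'}^\mathrm{real}$, the endpoints of $\origE'(uv)$ in $G_{\mathcal S'}$ are $\origV'(u)$ and $\origV'(v)$. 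The only mildly delicate point, and the one I would be most careful about, is the step justifying that \joinSP is applicable whenever a virtual edge exists; everything else is bookkeeping once triviality is established.
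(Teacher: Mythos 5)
Your proposal is correct and follows essentially the same route as the paper's proof: applicability of \joinSP whenever a virtual edge remains (via the loop-freeness from \cond{tree}), connectedness of $T_{\mathcal S}$ forcing a single terminal skeleton, and \cond{orig-inj} plus \cond{orig-real} giving the isomorphism. You are slightly more explicit about termination (the drop in $|\mathcal G|$) and about $\origE'$ being the edge half of the isomorphism, but these are just small amplifications of the same argument.
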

\begin{proof}
As all virtual edges are matched, and the matched virtual edge always belongs to a different skeleton (condition \cond{tree} ensures that $T_\mathcal{S}$ is loop-free), we can always apply \joinSP on a virtual edge until there are none left.
As $T_\mathcal{S}$ is connected, this means that the we always obtain a tree with a single node, that is an instance with only a single skeleton.
As a single application of \joinSP preserves the represented graph, any chain of multiple applications also does.
Note that $\origE'$ is a bijection and the surjective $\origV'$ is also injective on the single remaining skeleton due to condition \cond{orig-inj}, thus it also globally is a bijection.
Together with condition \cond{orig-real}, this ensures that any two vertices $u$ and $v$ of $G_\mu'$ are adjacent if and only if $\origV'(u)$ and $\origV'(v)$ are adjacent in $G_\mathcal{S'}$.
Thus $\origV'$ is an edge-preserving bijection, that is an isomorphism.
\end{proof}
}

\ifthenelse{\boolean{long}}{%
This gives us a second way of finding the represented graph by exhaustively joining all skeletons until there is only one left, obtaining the unique trivial skeleton decomposition:
\ExhaustiveJoin

}{%
Note that this gives us a second way of finding the represented graph by exhaustively joining all skeletons until there is only one left, the unique trivial skeleton decomposition (see also \Cref{lem:ExhaustiveJoin} in the Appendix).
}%
A key point about the \skeldec and especially the operation \splitSP now is that they model the decomposition of a graph at separation pairs.
This decomposition was formalized as \emph{SPQR-tree} by Di Battista and Tamassia~\cite{Battista1989} and is unique for a given graph~\cite{Hopcroft1973,MacLane1937}; see also~\cite{Gutwenger2010,Gutwenger2001}.
Angelini et al.~\cite{Angelini2014} describe a decomposition tree that is conceptually equivalent to our \skeldec.
They also present an alternative definition for the SPQR-tree as a decomposition tree satisfying further properties.
We adopt this definition for our \skeldec{}s as follows, not requiring planarity of triconnected components and allowing virtual edges and real edges to appear within one skeleton (i.e., having leaf Q-nodes merged into their parents).

\begin{definition}\label{thm:spqrt}\label{def:spqrt}
A \skeldec \skeldecVars where any skeleton in $\mathcal G$ is either a polygon, a bond, or triconnected (``rigid''),
and two skeletons adjacent in $T_\mathcal{S}$ are never both polygons or both bonds,
is the unique SPQR-tree of $G_\mathcal{S}$.
\end{definition}

The main difference between the well-known ideas behind decomposition trees and our \skeldec is that the latter allow an axiomatic access to the decomposition at separation pairs.
For the \skeldec, we employ a purely functional, ``mechanical'' data structure instead of relying on and working along a given graph structure.
In our case, the represented graph is deduced from the data structure (i.e. SPQR-tree) instead of computing the data structure from the graph.

\section{Extended Skeleton Decompositions}\label{sec:eskeldec}

Note that most skeletons, especially polygons and bonds, can easily be decomposed into smaller parts.
The only exception to this are triconnected skeletons which cannot be split further using the operations we defined up to now.
This is a problem when modifying a vertex that occurs in triconnected skeletons that may be much bigger than the direct neighborhood of the vertex.
To fix this, we define a further set of operations which allow us to isolate vertices out of arbitrary triconnected components by replacing them with a (``virtual'') placeholder vertex.
This placeholder then points to a smaller component that contains the actual vertex, see \Cref{fig:isolV}.
Modification of the edges incident to the placeholder is disallowed, which is why we call them ``occupied''.

Formally, the structures needed to keep track of the components split in this way in an \emph{extended} \skeldec \eskeldecVars are defined as follows.
Skeletons now have the form $G_\mu=(V_\mu \cupdot V_\mu^\mathrm{virt}, \  E_\mu^\mathrm{real} \cupdot E_\mu^\mathrm{virt} \cupdot E_\mu^\mathrm{occ})$.
Bijection $\twinV : V^\mathrm{virt} \rightarrow V^\mathrm{virt}$ matches all \emph{virtual vertices} $V^\mathrm{virt} = \bigcup_\mu V_\mu^\mathrm{virt}$, such that $\twinV(v)\neq v$, $\twinV^2=\text{id}$.
The edges incident to virtual vertices are contained in $E_\mu^\mathrm{occ}$ and thus considered \emph{occupied}; see \Cref{fig:isolV-post}.
Similar to the virtual edges matched by $\twinE$, any two virtual vertices matched by $\twinV$ induce an edge between their skeletons in $T_\mathcal{S}$.
Condition \cond{tree} also equally applies to those edges induced by $\twinV$, which in particular ensures that there are no parallel $\twinE$ and $\twinV$ tree edges in $T_\mathcal{S}$.
Similarly, the connected subgraphs of condition \cond{subgraph} can also contain tree edges induced by $\twinV$.
All other conditions remain unchanged, but we add two further conditions to ensure that $\twinV$ is consistent:

\begin{description}
\nameditem[cond]{cond:stars}{7~(stars)}
  For each $v_\alpha,v_\beta$ with $\twinV(v_\alpha)=v_\beta$, it is $\deg(v_\alpha)=\deg(v_\beta)$.
  All edges incident to $v_\alpha$ and $v_\beta$ are occupied and have distinct endpoints (except for $v_\alpha$ and $v_\beta$).
  Conversely, each occupied edge is adjacent to exactly one virtual vertex.
\nameditem[cond]{cond:orig-stars}{8~(orig-stars)}
  Let $v_\alpha$ and $v_\beta$ again be two virtual vertices matched with each other by $\twinV$.
  For their respective skeletons $G_\alpha$ and $G_\beta$ (where $\alpha$ and $\beta$ are adjacent in $T_{\mathcal S}$), it is
  $\origV(V_\alpha) \cap \origV(V_{\beta}) = \origV(N(v_\alpha))=\origV(N(v_\beta))$.
\end{description}

Note that both conditions together yield a bijection $\gamma_{v_\alpha v_\beta}$ between the neighbors of $v_\alpha$ and $v_\beta$,
as $\origV$ is injective when restricted to a single skeleton (condition \cond{orig-inj}) and $\deg(v_\alpha)=\deg(v_\beta)$.
Operations \splitSP and \joinSP can also be applied to an \eskeldec, yielding an \eskeldec without modifying $\twinV$.
To ensure that conditions \cond{stars} and \cond{orig-stars} remain unaffected by both operations, \splitSP cannot be applied if a vertex of the separation pair is virtual.

\begin{figure}[t]
  \begin{subfigure}[t]{0.35\textwidth}
    \centering
    \includegraphics[page=1]{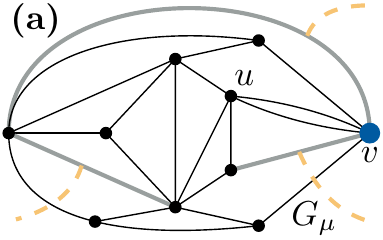}
    \phantomsubcaption
    \label{fig:isolV-pre}
  \end{subfigure}%
  \begin{subfigure}[t]{0.65\textwidth}
    \centering
    \includegraphics[page=2]{graphics/isolV}
    \phantomsubcaption
    \label{fig:isolV-post}
  \end{subfigure}

  \caption{
    \textbf{(a)} A triconnected skeleton $G_\mu$ with a highlighted vertex $v$ incident to two gray virtual edges.
    \textbf{(b)} The result of applying \isolV to isolate $v$ out of the skeleton.
      The red occupied edges in the old skeleton $G_\alpha$ form a star with center $v_\alpha$, while the red occupied edges in $G_\beta$ connect all neighbors of $v$ to form a star with center $v_\beta\neq v$.
      The centers $v_\alpha$ and $v_\beta$ are virtual and matched with each other.
      Neighbor $u$ of $v$ was split into vertices $u_\alpha$ and $u_\beta$.
  }
  \label{fig:isolV}
\end{figure}

The operations \isolV and \integV now allow us to isolate vertices out of triconnected components and integrate them back in, respectively.
  For $\isolV$, let $v$ be a non-virtual vertex of skeleton $G_\mu$, such that $v$ has no incident occupied edges.
  Applying $\isolV(\mathcal S, v)$ on an \eskeldec $\mathcal S$ yields an \eskeldec \eskeldecVars['] as follows.
  Each neighbor $u$ of $v$ is split into two non-adjacent vertices $u_\alpha$ and $u_\beta$,
  where $u_\beta$ is incident to all edges connecting $u$ with $v$, while $u_\alpha$ keeps all other edges of $u$.
  We set $\origV'(u_\alpha)=\origV'(u_\beta)=\origV(u)$.
  This creates an independent, star-shaped component with center $v$, which we move to skeleton $G_\beta$, while we rename skeleton $G_\mu$ to $G_\alpha$.
  We connect all $u_\alpha$ to a single new virtual vertex $v_\alpha\in V_{\alpha}^\textrm{virt}$ using occupied edges, and all $u_\beta$ to a single new virtual vertex $v_\beta\in V_{\beta}^\textrm{virt}$ using occupied edges; see \Cref{fig:isolV}.
  Finally, we set $\twinV'(v_\alpha)=v_\beta$, $\twinV'(v_\beta)=v_\alpha$, and add $G_\beta$ to $\mathcal G'$.
  All other mappings and skeletons remain unchanged.

  For $\integV$, consider two virtual vertices $v_\alpha,v_\beta$ with $\twinV(v_\alpha)=v_\beta$ and the bijection~$\gamma_{v_\alpha v_\beta}$ between the neighbors of $v_\alpha$ and $v_\beta$.
  An application of $\integV(\mathcal S, (v_\alpha, v_\beta))$ yields an \eskeldec \eskeldecVars['] as follows.
  We merge both skeletons into a skeleton $G_\mu$ (also replacing both in $\mathcal G'$) by identifying the neighbors of $v_\alpha$ and $v_\beta$ according to $\gamma_{v_\alpha v_\beta}$.
  Furthermore, we remove $v_\alpha$ and $v_\beta$ together with their incident occupied edges.
  All other mappings and skeletons remain unchanged.


\begin{restatable}{lemma}{lemIsolInteg}
Applying \isolV or \integV on an \eskeldec \eskeldecVars yields an \eskeldec \eskeldecVars['] with $G_\mathcal{S'}=G_{\mathcal S}$.
\end{restatable}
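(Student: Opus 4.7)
The plan is to mirror the structure of the split/join lemma by verifying each of the eight \eskeldec conditions in turn, separately for \isolV and \integV, and then observing that $G_\mathcal{S'}=G_\mathcal{S}$ follows as a corollary. The key observation that keeps the representation preserved is that neither operation touches $\origE$ or the multiset of real edges, and the only changes to $\origV$ are to either duplicate a vertex $u$ into two copies $u_\alpha,u_\beta$ with $\origV'(u_\alpha)=\origV'(u_\beta)=\origV(u)$ (in \isolV) or identify pairs of vertices that the input already mapped to the same element of $V$ (in \integV, by condition \cond{orig-stars}). Hence the surjective image of $\origV$ is unchanged, and condition \cond{orig-real} on the resulting instance will force the edge set of the represented graph to agree as well.

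For \isolV, I would check the conditions in this order. Condition \cond{bicon}: the skeleton $G_\alpha$ is $G_\mu$ with $v$ replaced by $v_\alpha$ (possibly changing multi-edges into single occupied edges but preserving the adjacency structure at $v$), so biconnectivity carries over; the skeleton $G_\beta$ is a theta-like graph in which every copy $u_\beta$ lies on a path from $v$ through to $v_\beta$, which is biconnected whenever $v$ has at least two distinct neighbors in $G_\mu$. Condition \cond{tree} follows because the operation merely splits the node $\mu$ in $T_\mathcal S$ into two nodes $\alpha,\beta$ joined by the freshly introduced $\twinV$-edge, preserving the tree. Conditions \cond{orig-inj}, \cond{orig-real}, and \cond{orig-virt} are maintained because the split copies inherit their $\origV$-image from the original vertex, real edges only get their endpoint names changed to vertices with the same $\origV$-image, and no existing $\twinE$-pair is touched. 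Condition \cond{subgraph} is exactly where the new $\twinV$-edge between $\alpha$ and $\beta$ does its work: each graph vertex that had $G_\mu$ as an allocation skeleton now has $G_\alpha$, $G_\beta$, or both as allocation skeletons, and the new tree edge reconnects any previously adjacent allocation skeletons across the split. Finally, conditions \cond{stars} and \cond{orig-stars} hold by construction of $v_\alpha, v_\beta$ and the occupied edges attached to them.

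For \integV, the conditions are checked in essentially reverse: \cond{bicon} because merging two biconnected skeletons along a bijectively identified set of boundary vertices cannot introduce a cutvertex; \cond{tree} because contracting a $\twinV$-edge in $T_\mathcal S$ preserves the tree; \cond{orig-inj} because \cond{orig-stars} on the input forces $\gamma_{v_\alpha v_\beta}$-paired vertices to carry identical $\origV$-values, so identifying them does not create new collisions; \cond{orig-real} because the endpoints of real edges remain $\origV$-consistent; and \cond{orig-virt}, \cond{stars}, \cond{orig-stars}, \cond{subgraph} because no other virtual vertices, occupied edges, or allocation subtrees are affected by removing $v_\alpha$, $v_\beta$ and contracting the tree edge between $\alpha$ and $\beta$.

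I expect the most delicate point to be \cond{bicon} for \isolV in the presence of parallel edges between $v$ and a neighbor $u$, since occupied edges of $v_\alpha$ have distinct endpoints by \cond{stars} and therefore collapse such parallel structure on the $\alpha$ side. A careful case distinction on the number of distinct neighbors of $v$ in $G_\mu$, together with the biconnectivity of $G_\mu$, should resolve this. The second nontrivial point will be \cond{subgraph} for \isolV, where one must argue that any allocation-skeleton path in $T_\mathcal S$ that used to pass through $\mu$ can be rerouted through the new $\twinV$-edge in $T_{\mathcal S'}$, so that no allocation subtree becomes disconnected. Once these points are settled, the preservation of the represented graph is immediate from the bookkeeping of $\origV$ and $\origE$ described above.
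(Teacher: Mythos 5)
Your proposal is correct and follows essentially the same condition-by-condition verification as the paper's own proof, with the same key observation that $\origE$ is untouched and $\origV$ only duplicates vertices (for \isolV) or re-identifies vertices that already share the same image (for \integV, via \cond{orig-stars}), so that the represented graph is preserved. The one place where the paper supplies an argument that you flag as nontrivial but leave unresolved is \cond{subgraph} for \isolV: the paper shows via \cond{orig-virt}/\cond{orig-stars} that if a neighbor $\nu$ of the split node is an allocation skeleton of some $z$, then whichever of $G_\alpha,G_\beta$ is now adjacent to $\nu$ has the same $\origV$-intersection with $G_\nu$ as $G_\mu$ did and hence must also be an allocation skeleton of $z$, which together with the new $\twinV$-edge between $\alpha$ and $\beta$ keeps the allocation subtree connected.
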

\newcommand{\proofIsolInteg}{
\begin{proof}
We first check that all conditions still hold in the \eskeldec $\mathcal S'$ returned by \isolV.
Condition \cond{bicon} remains satisfied, as the structure of $G_\alpha$ remains unchanged compared to $G_\mu$ and the skeleton $G_\beta$ is a bond.
As we are again splitting a node of $T_\mathcal{S}$, condition \cond{tree} also remains satisfied.
Due to the neighbors of $v_\beta$ and $v_\alpha$ mapping to the same vertices of $G_\mathcal{S'}$, conditions \cond{orig-inj}, \cond{orig-real}, and \cond{orig-virt} remain satisfied.
Conditions \cond{stars} and \cond{orig-stars} are satisfied by construction.

Lastly, condition \cond{subgraph} could only be violated if the subgraph of $T_\mathcal{S'}$ formed by the allocation skeletons of some vertex $z \in G_\mathcal{S'}$ was no longer connected.
This could only happen if only one of $G_\alpha$ and $G_\beta$ were an allocation skeleton of $z$, while the other has a further neighbor $G_\nu$ that is also an allocation skeleton of $z$.
Note that in any case, $\nu$ is adjacent to $\mu$ in $T_\mathcal{S}$ and $\mu$ must be an allocation skeleton of $z$, thus it is $z\in\origV(G_\nu)\cap\origV(G_\mu)$.
Depending on the adjacency of $\nu$, it is either $\origV(G_\nu)\cap\origV(G_\mu)=\origV'(G_\nu)\cap\origV(G_\alpha)$ or $\origV(G_\nu)\cap\origV(G_\mu)=\origV'(G_\nu)\cap\origV(G_\beta)$, as $\nu$ is not modified by the operation and both $\mathcal S$ and $\mathcal S'$ satisfy \cond{orig-virt} and \cond{orig-stars}.
This immediately contradicts the skeleton of $\{\alpha,\beta\}$, that is adjacent to $\nu$, not being an allocation skeleton of $z$.

Finally, the mapping $\origE$ remains unchanged and the only change to $\origV$ is to include some duplicated vertices mapping to already existing vertices.
Due to condition \cond{orig-real} holding for both the input and the output instance, this cannot affect the represented graph $G_\mathcal{S'}$.

Now consider the \eskeldec $\mathcal S'$ returned by \integV.
The merged skeleton is biconnected, as we are effectively replacing a single vertex by a connected subgraph, satisfying \cond{bicon}.
The operation effectively contracts and removes an edge in $T_\mathcal{S}$, which does not affect $T_\mathcal{S'}$ being a tree, satisfying condition \cond{tree}.
Note that condition \cond{tree} holding for the input instance also ensures that $v_\alpha$ and $v_\beta$ belong to two distinct skeletons.
As the input instance satisfies condition \cond{orig-virt}, the vertices in each of the two adjacent skeletons where $\origV$ maps to the same vertex of $G_\mathcal{S}$ are exactly the neighbors of the matched $v_\alpha$ and $v_\beta$.
Thus, $\origV|_{V_\alpha}$ is still injective, satisfying condition \cond{orig-inj}.
As we modify no real or virtual edges, the mappings $\origV', \origE'$ and $\twinE'$ obviously still satisfy conditions \cond{orig-real} and \cond{orig-virt}.
Finally, contracting a tree edge cannot lead to any of the subgraphs of \cond{subgraph} becoming disconnected, thus the condition also remains satisfied.
Conditions \cond{stars} and \cond{orig-stars} also remain unaffected, as we simply remove an entry from $\twinV$.

Again, no changes were made to $\origE$, while condition \cond{orig-stars} makes sure that $\origV$ mapped each pair of merged vertices to the same vertex of $G_{\mathcal S}$.
Thus, the represented graph $G_\mathcal{S'}$ remains unchanged.
\end{proof}
}
\ifthenelse{\boolean{long}}{\proofIsolInteg}{}

Furthermore, as \integV is the converse of \isolV and has no preconditions, any changes made by \isolV can be undone at any time to obtain a (non-extended) \skeldec, and thus possibly the SPQR-tree of the represented graph.

\begin{remark}\label{obs:ne-skeldec}
Exhaustively applying \integV to an \eskeldec \eskeldecVars yields a \eskeldec \eskeldecVars['] where $\twinV'=\emptyset$.
Thus, $\mathcal S'$ is equivalent to a (non-extended) \skeldec \skeldecVars['].
\end{remark}

\section{Node Expansion in Extended Skeleton Decompositions}\label{sec:dyn-eskeldec}

\begin{figure}[t]
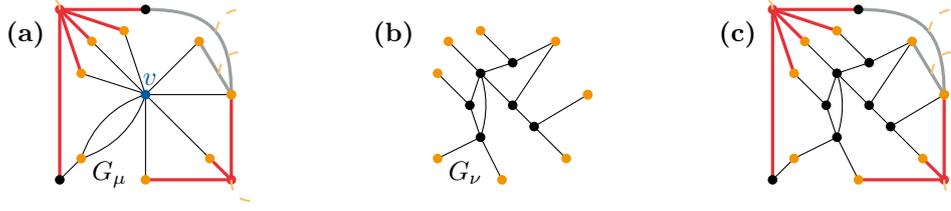

  \begin{subfigure}[t]{.33\textwidth}
    \centering
    \includegraphics[page=8]{graphics/skeldec}
    \phantomsubcaption
    \label{fig:insertGSPQR-pre}
  \end{subfigure}%
  \hfill
  \begin{subfigure}[t]{.33\textwidth}
    \centering
    \includegraphics[page=10]{graphics/skeldec}
    \phantomsubcaption
    \label{fig:insertGSPQR-graph}
  \end{subfigure}%
  \hfill
  \begin{subfigure}[t]{.33\textwidth}
    \centering
    \includegraphics[page=12]{graphics/skeldec}
    \phantomsubcaption
    \label{fig:insertGSPQR-integ}
  \end{subfigure}

  \caption{
    Expanding a skeleton vertex $v$ into a graph $G_\nu$ in the SPQR-tree of \Cref{fig:eskeldec-join-alloc}.
    \textbf{(a)}~The single allocation skeleton $G_\mu$ of $u$ with the single allocation vertex $v$ of $u$ from \Cref{fig:eskeldec-join-alloc}.
      The neighbors of $v$ are marked in orange.
    \textbf{(b)}~The inserted graph $G_\nu$ with orange marked vertices.
      Note that the graph is biconnected when all marked vertices are collapsed into a single vertex.
    \textbf{(c)}~The result of applying $\insertG(\mathcal{S}, u, G_\nu, \phi)$ followed by an application of $\integV$ on the generated virtual vertices $v$ and $v'$.
  }
  \label{fig:insertGSPQR}
\end{figure}

We now introduce our first dynamic operation that allows us to actually change the represented graph by expanding a single vertex $u$ into an arbitrary connected graph $G_\nu$.
This is done by identifying $|N(u)|$ marked vertices in $G_\nu$ with the neighbors of $u$ via a bijection $\phi$ and then removing $u$ and its incident edges.
We use the ``occupied stars'' from the previous section to model the identification of these vertices, allowing us to defer the actual insertion to an application of  \integV.
We need to ensure that the inserted graph makes the same ``guarantees'' to the surrounding graph in terms of connectivity as the vertex it replaces,
that is all neighbors of $u$ (i.e. all marked vertices in $G_\nu$) need to be pairwise connected via paths in $G_\nu$ not using any other neighbor of $u$ (i.e. any other marked vertex).
Without this requirement, a single vertex could e.g. also be split into two non-adjacent halves, which could easily break a triconnected component apart.
Thus, we require $G_\nu$ to be biconnected when all marked vertices are collapsed into a single vertex.
Note that this also ensures that the old graph can be restored by contracting the vertices of the inserted graph.
For the sake of simplicity, we require vertex $u$ from the represented graph to have a single allocation vertex $v\in G_\mu$ with $\origV^{-1}(u)=\{v\}$ so that we only need to change a single allocation skeleton $G_\mu$ in the \skeldec.
As we will make clear later on, this condition can be satisfied easily.

Formally, let $u\in G_\mathcal{S}$ be a vertex that only has a single allocation vertex $v\in G_\mu$ (and thus only a single allocation skeleton $G_\mu$).
Let $G_\nu$ be an arbitrary, new graph containing $|N(u)|$ marked vertices, together with a bijection $\phi$ between the marked vertices in $G_\nu$ and the neighbors of $v$ in $G_\mu$.
We require $G_\nu$ to be biconnected when all marked vertices are collapsed into a single node.
Operation $\insertG(\mathcal{S}, u, G_\nu, \phi)$ yields an \eskeldec \eskeldecVars['] as follows, see also \Cref{fig:insertGSPQR}.
We interpret $G_\nu$ as skeleton and add it to $\mathcal G'$.
For each marked vertex $x$ in $G_\nu$, we set $\origV'(x)=\origV(\phi(x))$.
For all other vertices and edges in $G_\nu$, we set $\origV'$ and $\origE'$ to point to new vertices and edges forming a copy of $G_\nu$ in $\mathcal G_\mathcal{S'}$.
We connect every marked vertex in $G_\nu$ to a new virtual vertex $v'\in G_\nu$ using occupied edges.
We also convert $v$ to a virtual vertex, converting its incident edges to occupied edges while removing parallel edges.
Finally, we set $\twinV'(v)=v'$ and $\twinV'(v')=v$.

\begin{restatable}{lemma}{lemInsertG}\label{lem:insertG}
Applying $\insertG(\mathcal{S}, u, G_\nu, \phi)$ on an \eskeldec \eskeldecVars yields an \eskeldec \eskeldecVars['] with $G_\mathcal{S'}$ isomorphic to \replace{G_\mathcal{S}}{G_\nu}{u}{\phi}.
\end{restatable}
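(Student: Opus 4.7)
The plan is to verify that the output $\mathcal{S}'$ satisfies all eight conditions of an \eskeldec and then identify $G_{\mathcal{S}'}$ with the expansion by means of an application of \integV. Since the operation only touches $G_\mu$ (which loses its vertex $v$ and gets a virtual vertex in its place) and introduces the new skeleton $G_\nu$ together with one new $\twinV'$ pair, most of the conditions reduce to checking these two local changes.

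For \cond{bicon}, the modification of $G_\mu$ converts $v$ into a virtual vertex and its incident edges into occupied edges while removing parallels; structurally $G_\mu$ remains the same multigraph up to identification of parallels, and since biconnectivity of a multigraph is invariant under parallel-edge reduction, $G_\mu$ stays biconnected. For $G_\nu$, adding the virtual center $v'$ connected by occupied edges to all marked vertices is topologically equivalent to the collapse of the marked vertices into a single node, which is biconnected by hypothesis. Condition \cond{tree} holds because the only new tree edge is the one between $\mu$ and the fresh leaf $\nu$ induced by $\twinV'(v)=v'$. Condition \cond{orig-inj} is preserved in $G_\nu$ because the marked vertices are mapped via $\phi$ to the pairwise-distinct neighbors of $v$ in $G_\mu$ (which are distinct in $G_{\mathcal{S}}$ by the old \cond{orig-inj}) and the unmarked vertices are mapped to freshly created graph vertices; in $G_\mu$ nothing changes. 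Conditions \cond{orig-real} and \cond{orig-virt} are immediate, since $v$'s incident edges are all either removed (parallels) or converted into occupied edges, while no virtual edges are touched and the new real edges of $G_\nu$ are set up to have the correct endpoints by construction. Conditions \cond{stars} and \cond{orig-stars} for the new pair $(v,v')$ hold by construction: both vertices have degree $|N(u)|$, all their incident edges are occupied, and the bijection $\gamma_{vv'}$ induced on the neighbors agrees precisely with $\phi$ composed with $\origV$.

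The most delicate check is \cond{subgraph}: for every vertex $z$ of $G_{\mathcal{S}'}$ the allocation skeletons must form a connected subtree of $T_{\mathcal{S}'}$. For vertices of $G_\mathcal{S}$ other than $u$ and other than the neighbors of $u$, the set of allocation skeletons is unchanged, so connectedness is inherited from $\mathcal{S}$. For a neighbor $w$ of $u$, the allocation skeletons in $\mathcal{S}$ form a connected subtree containing $\mu$ (as $w$ has a neighbor in $G_\mu$); in $\mathcal{S}'$ this subtree is extended by $\nu$, which is now an allocation skeleton because the marked vertex $\phi^{-1}(w')$ in $G_\nu$ maps to $w$, and $\nu$ is adjacent to $\mu$ via the new $\twinV'$ edge, preserving connectedness. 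For the unmarked vertices of $G_\nu$, $\nu$ is their unique allocation skeleton. The vertex $u$ itself no longer belongs to $G_{\mathcal{S}'}$ since it is not in the image of $\origV'$: its unique former allocation vertex $v$ has become virtual, and no other skeleton contained an allocation vertex for $u$ by assumption.

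Having established that $\mathcal{S}'$ is an \eskeldec, it remains to identify $G_{\mathcal{S}'}$ with \replace{G_{\mathcal{S}}}{G_\nu}{u}{\phi}. Apply $\integV(\mathcal{S}', (v, v'))$. By the previous lemma this yields an \eskeldec $\mathcal{S}''$ with $G_{\mathcal{S}''}=G_{\mathcal{S}'}$. The operation merges $G_\mu$ with $G_\nu$, identifying each neighbor of $v$ in $G_\mu$ with the corresponding marked vertex of $G_\nu$ using the bijection $\gamma_{vv'}$, which coincides with $\phi$ by the previous paragraph, and then deletes $v$, $v'$ and their incident occupied edges. The resulting skeleton is exactly $G_\mu$ with the vertex $v$ replaced by $G_\nu$ under the identification $\phi$; exhaustively applying \integV and \joinSP to $\mathcal{S}''$ (possible by Remark~\ref{obs:ne-skeldec} and Lemma~\ref{lem:ExhaustiveJoin}) yields a trivial \skeldec whose single skeleton is isomorphic to $\replace{G_{\mathcal{S}}}{G_\nu}{u}{\phi}$, and by those same results this single skeleton is isomorphic to $G_{\mathcal{S}''}=G_{\mathcal{S}'}$. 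The main obstacle is the careful bookkeeping in \cond{subgraph} and in tracking how the removed parallel edges at $v$ correspond to the edges of $u$ that get deleted in the definition of $\replace{G_{\mathcal{S}}}{G_\nu}{u}{\phi}$; both are handled by the observation that these parallel edges occur exactly when two marked vertices in $G_\nu$ coincide with two $\phi$-images that were neighbors of $u$ through parallel edges, which the definition of \insertG correctly accounts for.
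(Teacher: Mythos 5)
Your proof is correct and follows essentially the same route as the paper's: it verifies the eight conditions one by one and then identifies the represented graph, with the only checks requiring care being \cond{bicon} (where you correctly observe that parallel-edge reduction preserves biconnectivity and that attaching the occupied star around $v'$ is equivalent to the hypothesized collapse of marked vertices) and \cond{subgraph} (which you split cleanly by vertex type). The one place where you go slightly beyond the paper is the concluding identification of $G_{\mathcal{S}'}$ with $\replace{G_\mathcal{S}}{G_\nu}{u}{\phi}$: the paper simply asserts this ``by construction'' of $\origE'$ and $\origV'$, whereas you make it explicit by applying $\integV(\mathcal{S}',(v,v'))$ (which preserves the represented graph by the preceding lemma) and observing that the merged skeleton realizes the expansion; this is a cleaner and more self-contained way to close the argument. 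Your identification of $\gamma_{vv'}$ with (the inverse of) $\phi$ is the key link between the abstract \cond{orig-stars} bijection and the concrete gluing map, and is stated correctly even if the phrasing ``$\phi$ composed with $\origV$'' is a bit loose.
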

\newcommand{\proofInsertG}{
\begin{proof}
Condition \cond{bicon} remains satisfied, as the structure of $G_\mu$ remains unchanged and the resulting $G_\nu$ is biconnected by precondition.
Regarding $T_\mathcal{S}$, we are attaching a degree-1 node $\nu$ to an existing node $\mu$, thus condition \cond{tree} also remains satisfied.
As all vertices of $G_\nu$ except for the vertices in $N(v')$ got their new, unique copy assigned by $\origV'$ and $\origV'(N(v'))=\origV(N(v))$, condition \cond{orig-inj} is also satisfied for the new $G_\nu$.
As we updated $\origE$ alongside $\origV$ and $G_\nu$ contains no virtual edges, conditions \cond{orig-real} and \cond{orig-virt} remain satisfied.
As $\nu$ is a leaf of $T_\mathcal{S}$ with $\mu$ being its only neighbor, $\origV'(N(v'))\subset\origV(V_\mu)$, and $G_\nu$ is the only allocation skeleton for all vertices in $G_\nu\setminus N(v')$, condition \cond{subgraph} remains satisfied.
Conditions \cond{stars} and \cond{orig-stars} are satisfied by construction.
Finally, the mappings $\origE'$ and $\origV'$ are by construction updated to correctly reproduce the structure of $G_\nu$ in $G_\mathcal{S'}$.
\end{proof}
}
\ifthenelse{\boolean{long}}{\proofInsertG}{}

On its own, this operation is not of much use though, as graph vertices only rarely have a single allocation skeleton.
Furthermore, our goal is to dynamically maintain SPQR-trees, while this operation on its own will in most cases not yield an SPQR-tree.
To fix this, we introduce the full procedure $\insertGSPQR(\mathcal{S}, u, G_\nu, \phi)$ that can be applied to any graph vertex $u$ and that, given an SPQR-tree $\mathcal{S}$, yields the SPQR-tree of $\replace{G_\mathcal{S}}{G_\nu}{u}{\phi}$.
It consists of three preparations steps, the insertion of $G_\nu$, and two further clean-up steps:
\begin{enumerate}
\item \label[step]{step:pp-poly}
  We apply $\splitSP$ to each polygon allocation skeleton of $u$ with more than three vertices, using the neighbors of the allocation vertex of $u$ as separation pair.
\item \label[step]{step:pp-rigid}
  For each rigid allocation skeleton of $u$, we move the contained allocation vertex $v$ of $u$ to its own skeleton by applying $\isolV(\mathcal S, v)$.
\item \label[step]{step:pp-join}
  We exhaustively apply $\joinSP$ to any pair of allocation skeletons of $u$ that are adjacent in $T_\mathcal{S}$.
  Due to condition \cond{subgraph}, this yields a single component $G_\mu$ that is the sole allocation skeleton of $u$ with the single allocation vertex $v$ of $u$.
  Furthermore, the size of $G_\mu$ is linear in $\deg(u)$.
\item \label[step]{step:m-insert}
  We apply $\insertG$ to insert $G_\nu$ as skeleton, followed by an application of $\integV$ to the virtual vertices $\{v, v'\}$ introduced by the insertion, thus integrating $G_\nu$ into $G_\mu$. 
\item \label[step]{step:cu-spqr}
  We apply $\splitSP$ to all separation pairs in $G_\mu$ that do not involve a virtual vertex. These pairs can be found in linear time, e.g. by temporarily duplicating all virtual vertices and their incident edges and then computing the SPQR-tree.\footnote{Note that the wheels replacing virtual vertices in the proof of \Cref{lem:maintain-planar-rotation} also ensure this.}
\item \label[step]{step:cu-join}
  Finally, we exhaustively apply $\integV$ and also apply $\joinSP$ to any two adjacent polygons and to any two adjacent bonds to obtain the SPQR-tree of the updated graph.
\end{enumerate}

\begin{figure}[t]
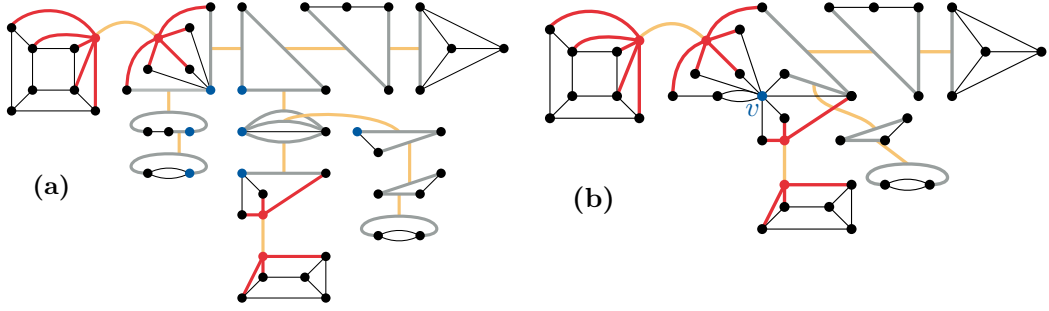

  \begin{subfigure}[T]{0.5\textwidth}
    \centering
    \includegraphics[page=4,width=.95\linewidth]{graphics/skeldec}
    \phantomsubcaption
    \label{fig:eskeldec-split-tri}
  \end{subfigure}
  \hfill
  \begin{subfigure}[T]{0.5\textwidth}
    \centering
    \includegraphics[page=5,width=.95\linewidth]{graphics/skeldec}
    \phantomsubcaption
    \label{fig:eskeldec-join-alloc}
  \end{subfigure}

  \caption{
    The preprocessing steps of \insertGSPQR being applied to the SPQR-tree of \Cref{fig:skeldec-skeletons}.
    \textbf{(a)} The state after \cref{step:pp-rigid}, after all allocation skeletons of $u$ have been split.
    \textbf{(b)} The state after \cref{step:pp-join}, after all allocation skeletons of $u$ have been merged into a single one.
  }
  \label{fig:eskeldec}
\end{figure}

The basic idea behind the correctness of this procedure is that splitting the newly inserted component according to its SPQR-tree in \cref{step:cu-spqr} yields biconnected components that are each either a polygon, a bond, or ``almost'' triconnected.
The latter (and only those) might still contain virtual vertices and all their remaining separation pairs, which were not split in \cref{step:cu-spqr}, contain one of these virtual vertices.
This, together with the fact that there still may be pairs of adjacent skeletons where both are polygons or both are bonds, prevents the instance from being an SPQR-tree.
Both issues are resolved in \cref{step:cu-join}: The adjacent skeletons are obviously fixed by the \joinSP applications.
To show that the virtual vertices are removed by the \integV applications, making the remaining components triconnected, we need the following lemma.

\begin{lemma}\label{lem:integV-sep-pairs}
Let $G_\alpha$ be a triconnected skeleton containing a virtual vertex $v_\alpha$ matched with a virtual vertex $v_\beta$ of a biconnected skeleton $G_\beta$.
Furthermore, let $P\subseteq \binom{V(G_\beta)}{2}$ be the set of all separation pairs in $G_\beta$.
An application of $\integV(\mathcal S, (v_\alpha, v_\beta))$ yields a biconnected skeleton $G_\mu$ with separation pairs $P'=\{\{u,v\}\in P\mid v_\beta\notin\{u,v\}\}$.
\end{lemma}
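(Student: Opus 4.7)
The plan is to verify three claims about $G_\mu$: it is biconnected, every pair in $P'$ is a separation pair of $G_\mu$, and every separation pair of $G_\mu$ lies in $P'$. All three arguments rest on two structural facts that I would state up front. First, $G_\alpha - v_\alpha$ is biconnected, because removing a single vertex from a triconnected graph leaves a $2$-connected graph. Second, $G_\beta - v_\beta$ is connected, since $G_\beta$ is biconnected. Writing $N := N(v_\alpha)$, the bijection $\gamma_{v_\alpha v_\beta}$ identifies $N$ with $N(v_\beta)$ in $G_\mu$, so $G_\mu$ is obtained by gluing $G_\alpha - v_\alpha$ and $G_\beta - v_\beta$ along $N$. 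This gives a \emph{rerouting principle}: any segment $n_1\,v_\alpha\,n_2$ of a $G_\alpha$-path can be replaced in $G_\mu$ by an $n_1$--$n_2$ path in $G_\beta - v_\beta$, and symmetrically for transits through $v_\beta$.

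For the forward inclusion, I would take $\{u,w\}\in P$ with $v_\beta\notin\{u,w\}$ and exhibit an isolated component of $G_\mu - \{u,w\}$. The component $C_1$ of $G_\beta - \{u,w\}$ containing $v_\beta$ also contains every neighbor of $v_\beta$ outside $\{u,w\}$, because such a neighbor is joined to $v_\beta$ by an edge that survives the removal of $\{u,w\}$. Hence any further component $C_i$ of $G_\beta - \{u,w\}$ is disjoint from $N$, so in $G_\mu - \{u,w\}$ it shares no vertex with the $G_\alpha$-side and has no incident edge leaving it. This witnesses $\{u,w\}$ as a separation pair of $G_\mu$.

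For the reverse inclusion, I would let $\{u,w\}$ be a separation pair of $G_\mu$ and argue by case analysis on where $u,w$ lie. If one of them, say $u$, is in $V(G_\alpha)\setminus N$, then $G_\alpha - \{u,w\}$ is connected by triconnectedness of $G_\alpha$; applying the rerouting principle to replace every $v_\alpha$-transit by a detour through $G_\beta - v_\beta$, and noting that every vertex of $G_\beta$ reaches some vertex of $N\setminus\{u,w\}$ inside $G_\beta - v_\beta$, yields connectedness of $G_\mu - \{u,w\}$, a contradiction. So $u,w\in V(G_\beta)\setminus\{v_\beta\}$. A symmetric argument then shows $\{u,w\}$ must separate $G_\beta$: otherwise $G_\beta - \{u,w\}$ is connected, and rerouting $v_\beta$-transits through the biconnected $G_\alpha - v_\alpha$ gives connectedness of $G_\mu - \{u,w\}$. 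Biconnectedness of $G_\mu$ follows as a byproduct, since every separation pair has size two and no single vertex can separate $G_\mu$.

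I expect the principal obstacle to be the subcase of the reverse inclusion in which $\{u,w\}\subseteq N$, so that removing $u$ and $w$ simultaneously deletes vertices from both the $G_\alpha$-side and the $G_\beta$-side. In this situation $G_\alpha - \{u,w,v_\alpha\}$ need not be connected, and the rerouting has to be performed carefully: one must exploit that $G_\alpha$ is triconnected to show every component of $G_\alpha - \{u,w,v_\alpha\}$ meets $N\setminus\{u,w\}$ (via the path through $v_\alpha$ that the triconnectedness guarantees) and then stitch these components together using the connectivity of $G_\beta - \{u,w\}$ through $v_\beta$-transits that, in turn, are routed back through the surviving part of $G_\alpha - v_\alpha$.
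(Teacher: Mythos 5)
Your plan mirrors the paper's proof quite closely: both partition $V(G_\mu)$ into the sets $A$, $B$ and $N$ you implicitly introduce (vertices coming only from $G_\alpha$, only from $G_\beta$, or from both), and both prove the reverse inclusion by case analysis on where the two vertices lie, pulling the disconnection back to one of the two original skeletons and contradicting triconnectivity of $G_\alpha$ or membership of the pair in $P$. Two differences: you also argue the forward inclusion and biconnectedness explicitly, which the paper does not (its proof only shows that every separation pair of $G_\mu$ lies in $P'$, relying on the earlier lemma about $\integV$ for biconnectedness); and where the paper contracts all of $B$ (resp.\ $A$) into a single vertex to ``reobtain $G_\alpha$'' (resp.\ $G_\beta$), you reroute $v_\alpha$- and $v_\beta$-transits along paths in the opposite side.

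The subcase you flag at the end, $\{u,w\}\subseteq N$, is not merely a technicality to be careful about --- it is exactly where both your argument and the paper's break down, and in fact the lemma as stated is false there. The paper's proof claims that if $u,v\in A\cup N$, then removing $u,v$ ``leaves $B$ connected'' (meaning all of $B$ stays in a single component of $G_\mu-\{u,v\}$), which is what licenses contracting $B$ into a vertex playing the role of $v_\alpha$. This claim can fail when $\{u,v\}\subseteq N$: removing two neighbors of $v_\alpha$ removes two neighbors of $v_\beta$ simultaneously, and, as you observe, $G_\alpha-\{u,w,v_\alpha\}$ and $G_\beta-\{u,w,v_\beta\}$ can each be disconnected. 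Your suggested stitching via $v_\beta$-transits routed back through $G_\alpha-v_\alpha$ is circular (each side needs the other to be already reconnected) and cannot in general be closed off. A concrete counterexample: take $G_\alpha$ with vertices $v_\alpha,n_1,n_2,n_3,n_4,a_1,a_2$ where $v_\alpha$ has occupied edges to the $n_i$, plus real edges $n_1n_3$, $n_2n_4$, $n_1a_1$, $n_2a_2$, $n_3a_1$, $n_3a_2$, $n_4a_1$, $n_4a_2$ (this $G_\alpha$ is triconnected). Take $G_\beta$ with vertices $v_\beta,n_1',n_2',n_3',n_4',b_1,b_2$ where $v_\beta$ has occupied edges to the $n_i'$, plus real edges $n_1'b_1$, $n_2'b_2$, $n_3'b_1$, $n_3'b_2$, $n_4'b_1$, $n_4'b_2$ ($G_\beta$ is biconnected, and $\{n_3',n_4'\}$ is \emph{not} a separation pair because $b_1$ and $b_2$ stay connected through $n_1'$, $v_\beta$, $n_2'$). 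After $\integV$, the graph $G_\mu-\{n_3,n_4\}$ splits into $\{n_1,a_1,b_1\}$ and $\{n_2,a_2,b_2\}$, so $\{n_3,n_4\}$ is a separation pair of $G_\mu$ even though it is not in $P'$. Here neither $A$ nor $B$ is contained in one component of $G_\mu-\{n_3,n_4\}$, so the paper's contraction step does not apply, and your rerouting has no surviving cross-paths to fall back on. So your instinct that this subcase is the principal obstacle is correct; no amount of careful rerouting will fix it, and the statement itself needs additional hypotheses (or a different formulation) to hold.
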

\begin{proof}

We partition the vertices of $G_\mu$ into the sets $A,B$, and $N$ depending on whether the vertex stems from $G_\alpha$, $G_\beta$, or both, respectively.
The set $N$ thus contains the neighbors of $v_\alpha$, which were identified with the neighbors of $v_\beta$.
%
We will now show by contradiction that $G_\mu$ contains no separation pairs except for those in $P'$.
Thus, consider a separation pair $u,v\in G_\mu$ not in $P'$.
%
First, consider the case where $u,v\in A\cup N$.
Observe that removing $u,v$ in this case leaves $B$ connected.
Thus, we can contract all vertices of $B$ into a single vertex, reobtain~$G_\alpha$ and see that $u,v$ is a separation pair in $G_\alpha$.
This contradicts the precondition that $G_\alpha$ is triconnected.
Now consider the case where $u,v\in B\cup N$.
Analogously to above, we find that $u,v$ is a separation pair in $G_\beta$ that does not contain $v_\beta$, a contradiction to $\{u,v\} \notin P'$.
Finally, consider the remaining case where, without loss of generality, $u\in A, v\in B$.
Since $\{u,v\}$ is a separation pair, $u$ has two neighbors $x,y$ that lie in different connected components of $G_\mu-\{u,v\}$ and therefore also in different components of $(G_\mu-\{u,v\})-B$ which is isomorphic to $G_\alpha-\{u,v_\alpha\}$.
This again contradicts the precondition that $G_\alpha$ is triconnected.
\end{proof}


\begin{theorem}\label{thm:insertGSPQR}
Applying $\insertGSPQR(\mathcal{S}, u, G_\nu, \phi)$ to an SPQR-tree $\mathcal{S}$ yields an SPQR-tree $\mathcal{S}'$ in $O(|G_\nu|)$ time with $G_\mathcal{S'}$ isomorphic to \replace{G_\mathcal{S}}{G_\nu}{u}{\phi}.
\end{theorem}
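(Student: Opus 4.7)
The plan is to verify three properties of the output: (a) that $G_{\mathcal{S}'}$ is isomorphic to $\replace{G_\mathcal{S}}{G_\nu}{u}{\phi}$, (b) that $\mathcal{S}'$ satisfies the SPQR-tree axioms of \Cref{def:spqrt}, and (c) that the running time is $O(|G_\nu|)$. Property (a) follows immediately, since each of \splitSP, \joinSP, \isolV, and \integV is graph-preserving by the preceding lemmas, and the single invocation of \insertG changes the represented graph exactly as desired by \Cref{lem:insertG}.

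For (b), I would track the structure after each step. \cref{step:pp-poly,step:pp-rigid,step:pp-join} use condition \cond{subgraph} to collapse the connected subtree of allocation skeletons of $u$ into a single skeleton $G_\mu$ containing a single allocation vertex $v$: \cref{step:pp-poly} reduces each polygon allocation skeleton to a triangle around $v$, \cref{step:pp-rigid} replaces each rigid allocation skeleton by a small star around $v$ via \isolV, and \cref{step:pp-join} merges all the remaining allocation skeletons. \cref{step:m-insert} then substitutes $G_\nu$ for $v$ inside $G_\mu$, producing a biconnected skeleton whose separation pairs may be arbitrary. \cref{step:cu-spqr} splits at every separation pair of the updated $G_\mu$ that does not involve a virtual vertex, so afterwards each resulting skeleton is either a polygon, a bond, or has only separation pairs that involve a virtual vertex. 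In \cref{step:cu-join}, each exhaustive \integV integrates such a virtual vertex with its twin, which lives in one of the rigids preserved since \cref{step:pp-rigid}; applying \Cref{lem:integV-sep-pairs} with that rigid as $G_\alpha$ shows that the integration removes exactly the separation pairs involving the integrated vertex. Iterating over all virtual vertices of a given non-polygon, non-bond piece therefore eliminates all of its separation pairs, yielding a triconnected skeleton. The subsequent \joinSP applications on adjacent polygons and on adjacent bonds then ensure that no two adjacent skeletons are both polygons or both bonds, completing the verification of \Cref{def:spqrt}.

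For (c), the crucial size bound is that after \cref{step:pp-join} the single allocation skeleton $G_\mu$ has size $O(\deg(u))$, as each polygon contributes only a triangle and the rigids together contribute stars whose total number of leaves is bounded by $\deg(u)$. \cref{step:m-insert} runs in $O(|G_\nu|)$ time, yielding an updated $G_\mu$ of size $O(|G_\nu|)$. The linear-time SPQR-tree computation in \cref{step:cu-spqr} (applied on $G_\mu$ with virtual vertices temporarily expanded, e.g., into wheels as noted in the footnote) and the exhaustive \integV and \joinSP in \cref{step:cu-join} each run in $O(|G_\nu|)$ time as well, giving the claimed total bound.

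The main obstacle is the correctness argument for the clean-up in \cref{step:cu-spqr,step:cu-join}: one must be sure that after \cref{step:cu-spqr} the residual separation pairs in each non-polygon, non-bond skeleton all involve a virtual vertex, and that \Cref{lem:integV-sep-pairs} continues to apply as successive rigids are integrated into a growing skeleton. The former is immediate from the definition of \cref{step:cu-spqr}, while the latter holds because each rigid created in \cref{step:pp-rigid} remains triconnected and can thus serve as a valid $G_\alpha$ in each application of the lemma.
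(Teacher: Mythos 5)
Your proposal is correct and follows essentially the same approach as the paper: verify that the represented graph is transformed as desired via the operation-preservation lemmas and \Cref{lem:insertG}, track the structural invariants through the six steps, use \Cref{lem:integV-sep-pairs} iteratively (with the rigids preserved by \cref{step:pp-rigid} as $G_\alpha$) to show the remaining virtual-vertex skeletons become triconnected, and bound the running time by noting that preprocessing and clean-up are $O(\deg(u))$ while the SPQR computation on the inserted graph dominates with $O(|G_\nu|)$.
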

\begin{proof}
As all operations that are applied leave the \eskeldec valid, the final \eskeldec $\mathcal{S}'$ is also valid.
Observe that the purpose of the preprocessing \cref{step:pp-poly,step:pp-join,step:pp-rigid} is solely to ensure that the preconditions of \insertG are satisfied and the affected component is not too large.
Note that all rigids split in \cref{step:pp-rigid} remain structurally unmodified in the sense that edges only changed their type, but the graph and especially its triconnectedness remains unchanged.
\Cref{step:m-insert} performs the actual insertion and yields the desired represented graph according to \Cref{lem:insertG}.
It thus remains to show that the clean-up steps turn the obtained \eskeldec into an SPQR-tree.
Applying $\integV$ exhaustively in \cref{step:cu-join} ensures that the \eskeldec is equivalent to a non-extended one (\Cref{obs:ne-skeldec}).
Recall that a non-extended \skeldec is an SPQR-tree if all skeletons are either polygons, bonds or triconnected and two adjacent skeletons are never both polygons or both bonds (\Cref{thm:spqrt}).
\Cref{step:cu-join} ensures that the second half holds, as joining two polygons (or two bonds) with $\joinSP$ yields a bigger polygon (or bond, respectively).
Before \cref{step:cu-join}, all skeletons that are not an allocation skeleton of $u$ are still unmodified and thus already have a suitable structure, i.e., they are either polygons, bonds or triconnected.
Furthermore, the allocation skeletons of $u$ not containing virtual vertices also have a suitable structure, as their splits were made according to the SPQR-tree in \cref{step:cu-spqr}.
It remains to show that the remaining skeletons, that is those resulting from the $\integV$ applications in \cref{step:cu-join}, are triconnected.
Note that in these skeletons, \cref{step:cu-spqr} ensures that every separation pair consists of at least one virtual vertex, as otherwise the computed SPQR-tree would have split the skeleton further.
Further note that, for each of these virtual vertices, the matched partner vertex is part of a structurally unmodified triconnected skeleton that was split in \cref{step:pp-rigid}.
\Cref{lem:integV-sep-pairs} shows that applying $\integV$ does not introduce new separation pairs while removing two virtual vertices if one of the two sides is triconnected.
We can thus exhaustively apply $\integV$ and thereby remove all virtual vertices and thus also all separation pairs, obtaining triconnected components.
This shows that the criteria for being an SPQR-tree are satisfied and, as $\insertG$ expanded $u$ to $G_\nu$ in the represented graph, we now have the unique SPQR-tree of $\replace{G_\mathcal{S}}{G_\nu}{u}{\phi}$.

Note that all operations we used can be performed in time linear in the degree of the vertices they are applied on.
For the bipartition of bridges input to \splitSP, it is sufficient to describe each bridge via its edges incident to the separation pair instead of explicitly enumerating all in vertices in the bridge.
Thus, the applications of \splitSP and \isolV in \cref{step:pp-poly,step:pp-rigid} touch every edge incident to $u$ at most once and thus take $O(\deg(u))$ time.
Furthermore, they yield skeletons that have a size linear in the degree of their respective allocation vertex of $u$.
As the subtree of $u$'s allocation skeletons has size at most $\deg(u)$, the \joinSP applications of \cref{step:pp-join} also take at most $O(\deg(u))$ time.
It also follows that the resulting single allocation skeleton of $u$ has size $O(\deg(u))$.
The applications of \insertG and \integV in \cref{step:m-insert} can be done in time linear in the number of identified neighbors, which is $O(\deg(u))$.
Generating the SPQR-tree of the inserted graph in \cref{step:cu-spqr} (where all virtual vertices where replaced by wheels) can be done in time linear in the size of the inserted graph~\cite{Gutwenger2001,Hopcroft1973}, that is $O(|G_\nu|)$.
Applying \splitSP according to all separation pairs identified by this SPQR-tree can also be done in $O(|G_\nu|)$ time in total.
Note that there are at most $\deg(u)$ edges between the skeletons that existed before \cref{step:m-insert} and those that were created or modified in \cref{step:m-insert,step:cu-spqr}, and these are the only edges that might now connect two polygons or two bonds.
As these tree edges have one endpoint in the single allocation skeleton of $u$, the applications of $\integV$ and $\joinSP$ in \cref{step:cu-join} run in $O(\deg(u))$ time in total.
Furthermore, they remove all pairs of adjacent polygons and all pairs of adjacent bonds.
This shows that all steps take $O(\deg(u))$ time, except for \cref{step:cu-spqr}, which takes $O(|G_\nu|)$ time.
As the inserted graph contains at least one vertex for each neighbor of $u$, the total runtime is in $O(|G_\nu|)$.
\end{proof}

\begin{corollary}\label{cor:merge-spqr-trees}
Let $\mathcal S_1,\mathcal S_2$ be two SPQR-trees together with vertices $u_1\in G_{\mathcal S_1},$ $u_2\in G_{\mathcal S_2}$, and let $\phi$ be a bijection between the edges incident to $u_1$ and the edges incident to $u_2$.
Operation $\mergeSPQR(\mathcal S_1,\mathcal S_2,u_1,u_2,\phi)$ yields the SPQR-tree of the graph \replace{G_{\mathcal S_1}}{G_{\mathcal S_2}-u_2}{u_1}{\phi}, i.e. the union of both graphs where the edges incident to $u_1,u_2$ were identified according to $\phi$ and $u_1,u_2$ removed, in time $O(\deg(u_1))=O(\deg(u_2))$.
\end{corollary}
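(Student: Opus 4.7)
The plan is to adapt $\insertGSPQR$ so that the expensive \cref{step:cu-spqr}, which costs $O(|G_\nu|)$, is replaced by work proportional only to $\deg(u_1)$ by exploiting the fact that we already possess the SPQR-tree $\mathcal S_2$ of $G_{\mathcal S_2}$ rather than being handed $G_\nu = G_{\mathcal S_2}-u_2$ as an opaque graph. The key insight is that the only part of $\mathcal S_2$ that could possibly need restructuring as a consequence of removing $u_2$ and attaching the boundary edges to $\mathcal S_1$ is the immediate neighborhood of $u_2$ inside $\mathcal S_2$; everything else in $\mathcal S_2$ is already an SPQR-tree and remains untouched.

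Concretely, I first apply the preprocessing \cref{step:pp-poly,step:pp-rigid,step:pp-join} of $\insertGSPQR$ symmetrically on both sides: to $u_1$ in $\mathcal S_1$, yielding a single allocation skeleton $G_{\mu_1}$ of size $O(\deg(u_1))$ with sole allocation vertex $v_1$, and analogously to $u_2$ in $\mathcal S_2$, yielding $G_{\mu_2}$ of size $O(\deg(u_2))$ with sole allocation vertex $v_2$. Since $\deg(u_1) = \deg(u_2)$ (the bijection $\phi$ matches their incident edges), this preprocessing takes $O(\deg(u_1))$ time on both sides and leaves each $G_{\mu_i}$ with size linear in $\deg(u_i)$. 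Then, mirroring \cref{step:m-insert}, I merge $G_{\mu_1}$ and $G_{\mu_2}$ into a single skeleton $G_\mu$ by converting $v_1$ and $v_2$ into matched virtual vertices (using the same construction as $\insertG$), applying $\integV$ to identify neighbors according to $\phi$, and removing parallel edges. Correctness of the represented graph follows from \Cref{lem:insertG} combined with \lemIsolInteg-style reasoning, noting that $G_{\mu_2}-v_2$ together with the rest of $\mathcal S_2$ represents exactly $G_{\mathcal S_2}-u_2$.

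Finally, I apply the cleanup \cref{step:cu-spqr,step:cu-join}. Crucially, only the merged skeleton $G_\mu$, of size $O(\deg(u_1))$, needs to be decomposed into its SPQR-tree in \cref{step:cu-spqr}; all other skeletons of $\mathcal S_1$ and $\mathcal S_2$ are untouched and already have the correct triconnected/polygon/bond structure. Thus \cref{step:cu-spqr} costs $O(\deg(u_1))$ rather than $O(|G_\nu|)$. \Cref{step:cu-join} resolves, along both the $\mathcal S_1$-side and the $\mathcal S_2$-side interface, any newly adjacent pair of polygons or of bonds and any remaining virtual vertex matchings; by \Cref{lem:integV-sep-pairs} this yields triconnected components and, together with the polygon/bond merging, produces a valid SPQR-tree per \Cref{def:spqrt}. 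Since the interface on each side has at most $\deg(u_1)$ tree edges, \cref{step:cu-join} also runs in $O(\deg(u_1))$ time.

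The main obstacle I anticipate is arguing the symmetric preprocessing of $u_2$ cleanly: one has to verify that the \cref{step:pp-rigid} applications of $\isolV$ on the $\mathcal S_2$ side, together with the subsequent merge, do not interfere with the triconnectedness arguments of \Cref{lem:integV-sep-pairs} used in the final cleanup. This is handled by observing that after \cref{step:pp-rigid} the rigid skeletons on the $\mathcal S_2$ side are structurally unmodified and remain triconnected, so every virtual vertex created by preprocessing on either side is matched to a vertex of a triconnected skeleton, which is exactly the precondition of \Cref{lem:integV-sep-pairs}. The remainder of the proof is a direct reuse of the arguments in the proof of \Cref{thm:insertGSPQR}.
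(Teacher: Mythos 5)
Your overall strategy — symmetric preprocessing via \cref{step:pp-poly,step:pp-rigid,step:pp-join} on both $u_1$ and $u_2$, a local merge at the interface, and then cleanup \cref{step:cu-spqr,step:cu-join} applied only to the merged skeleton $G_\mu$ of size $O(\deg(u_1))$ — matches the paper's proof, as does your argument that every virtual vertex created by \cref{step:pp-rigid} is matched to a triconnected skeleton so that \Cref{lem:integV-sep-pairs} applies in \cref{step:cu-join}.

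However, your merge step has a real gap. You propose to ``convert $v_1$ and $v_2$ into matched virtual vertices (using the same construction as $\insertG$), applying $\integV$ to identify neighbors according to $\phi$, and removing parallel edges,'' but this does not work as stated. First, $\integV$ does not accept an arbitrary bijection: the identification it performs is the bijection $\gamma_{v_\alpha v_\beta}$ on \emph{neighbors}, which is derived internally from $\origV$ via conditions \cond{stars} and \cond{orig-stars}, whereas $\phi$ is given as a bijection on incident \emph{edges}. Second, and more fundamentally, converting $v_1$ to a virtual vertex (as in $\insertG$) collapses parallel edges to make condition \cond{stars} hold, which destroys exactly the edge-level information that $\phi$ carries: if $u_1$ has two parallel edges to the same neighbor $w$ and $\phi$ sends them to edges of $u_2$ with \emph{distinct} other endpoints, this must be reflected in $G_\mu$, but the collapse makes it impossible. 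The paper resolves this with a small bespoke construction: it subdivides every edge incident to $u_1$ and $u_2$ (so each edge gets its own subdivision vertex), identifies subdivision vertices according to $\phi$ (now a vertex-level identification without parallel-edge ambiguity), and then deletes $v_1,v_2$ and suppresses the subdivision vertices to obtain $G_\mu$. Your proof would need either this subdivision trick or some equivalent mechanism to realize an edge-level bijection within the skeleton-decomposition framework, since the existing $\insertG/\integV$ operations only support vertex-level identification.
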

\begin{proof}
Operation \mergeSPQR works similar to the more general \insertGSPQR, although the running time is better because we already know the SPQR-tree for the graph being inserted.
We apply the preprocessing \cref{step:pp-poly,step:pp-rigid,step:pp-join} to ensure that both $u_1$ and $u_2$ have sole allocation vertices $v_1$ and $v_2$, respectively.
To properly handle parallel edges, we subdivide all edges incident to $u_1, u_2$ (and thus also the corresponding real edges incident to $v_1,v_2$) and then identify the subdivision vertices of each pair of edges matched by $\phi$.
By deleting vertices $v_1$ and $v_2$ and suppressing the subdivision vertices (that is, removing them and identifying each pair of incident edges) we obtain a skeleton $G_\mu$ that has size $O(\deg(u_1))=O(\deg(u_2))$.
Finally, we apply the clean-up \cref{step:cu-join,step:cu-spqr} to $G_\mu$ to obtain the final SPQR-tree.
Again, as the partner vertex of every virtual vertex in the allocation skeletons of $u$ is part of a triconnected skeleton, applying \integV exhaustively in \cref{step:cu-join} yields triconnected skeletons.
As previously discussed, the preprocessing and clean-up steps run in time linear in degree of the affected vertices, thus the overall runtime is $O(\deg(u_1))=O(\deg(u_2))$ in this~case.
\end{proof}


\subsection{Maintaining Planarity and Vertex Rotations}\label{sec:maintenance}
Note that expanding a vertex of a planar graph using another planar graph using \insertGSPQR (or merging two SPQR-trees of planar graphs using \Cref{cor:merge-spqr-trees}) might actually yield a non-planar graph.
This is, e.g., because the rigids of both graphs might require incompatible orders for the neighbors of the replaced vertex.
The aim of this section is to efficiently detect this case, that is a planar graph turning non-planar.
To check a general graph for planarity, it suffices to check the rigids in its SPQR-tree for planarity and each rigid allows exactly two planar embeddings, where one is the reverse of the other~\cite{Battista1996p}.
Thus, if a graph becomes non-planar through an application of \insertGSPQR, this will be noticeable from the triconnected allocation skeletons of the replaced vertex.
To be able to immediately report if the instance became non-planar, we need to maintain a rotation, that is a cyclic order of all incident edges, for each vertex in any triconnected skeleton. 
Note that we do not track the direction of the orders, that is we only store the order up to reversal.
As discussed later, the exact orders can also be maintained with a slight overhead.

\begin{theorem}\label{lem:maintain-planar-rotation}
SPQR-trees support the following operations:
\begin{itemize}
 \item $\normalfont\insertGSPQR(\mathcal{S}, u, G_\nu, \phi)$: expansion of a single vertex $u$ in time $O(|G_\nu|)$,
 \item $\normalfont\mergeSPQR(\mathcal S_1,\mathcal S_2,u_1,u_2,\phi)$: merging of two SPQR-trees in time $O(\deg(u_1))$,
 \item $\normalfont\texttt{IsPlanar}$: queries whether the represented graph is planar in time $O(1)$, and
 \item $\normalfont\texttt{Rotation}(u)$: queries for one of the two possible rotations of vertices $u$ in planar triconnected skeletons in time $O(1)$.
\end{itemize}
\end{theorem}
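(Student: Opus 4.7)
The plan is to augment the data structure with two pieces of information beyond what \Cref{thm:insertGSPQR} and \Cref{cor:merge-spqr-trees} already maintain. First, a global counter $c$ recording the number of rigid skeletons that are non-planar; this lets $\texttt{IsPlanar}$ return ``$c=0$'' in $O(1)$ time. Second, at each vertex of each rigid skeleton we explicitly store a doubly-linked cyclic list of its incident edges, representing the rotation up to reversal; this lets $\texttt{Rotation}(u)$ return a pointer to this list (via an auxiliary pointer from $u$ to one of its rigid allocation vertices) in $O(1)$.

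The non-trivial task is to show that these augmentations can be kept up to date within the running times already established in \Cref{thm:insertGSPQR} and \Cref{cor:merge-spqr-trees}. First I would verify that every primitive operation used there can be performed while updating both $c$ and the rotations in time proportional to the degrees of the affected vertices. For $\splitSP$ applied to a rigid: each vertex $w$ of the old skeleton lies in exactly one of the two new skeletons, and its new rotation is obtained from the old one by deleting the edges that crossed to the other side and inserting the new virtual edge in their place; since the two bridges lie in contiguous arcs of the planar rotation, this takes $O(\deg(w))$ time. For $\joinSP$ and $\integV$ applied along a virtual edge or virtual-vertex pair between two skeletons $G_\alpha, G_\beta$: at each pair of identified vertices we splice the two cyclic lists by replacing the virtual edge/vertex in one rotation with the corresponding segment from the other, reversing $G_\beta$'s rotations globally if necessary so that the two choices of rotation around the glued separation become compatible. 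If no orientation of $G_\beta$ makes the identifications consistent, the merged skeleton is non-planar and we increment $c$; otherwise the splice is straightforward.

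For $\insertGSPQR$, the extra cost is incurred in \cref{step:cu-spqr,step:cu-join}. Before \cref{step:cu-spqr} we use a linear-time planarity test on $G_\nu$ (with the temporary wheel replacements for virtual vertices mentioned in the footnote), which in one sweep gives us the SPQR-tree and planar embeddings of all its rigids in $O(|G_\nu|)$; from these embeddings we initialize the rotations at every vertex of every new rigid and bump $c$ once per non-planar rigid. The subsequent applications of $\splitSP$ and $\integV$ in \cref{step:cu-spqr,step:cu-join} touch only vertices in $G_\nu$ or the single previous allocation skeleton $G_\mu$ of $u$, and by the analysis in the proof of \Cref{thm:insertGSPQR} their total degree is $O(|G_\nu|)$, so the rotation bookkeeping stays within budget. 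For $\mergeSPQR$ the argument is identical but restricted to the $O(\deg(u_1))$ neighborhood, since the rotations inside both input SPQR-trees are already available and only need to be recombined along the glued boundary.

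The main obstacle is the consistency check during the $\integV$ in \cref{step:cu-join}: the two rigids being merged each come with a rotation that is fixed only up to global reversal, and we must decide, in time linear in the degree of the shared star, whether some choice of orientations makes every identified vertex's rotation consistent around the splice, or else certify non-planarity. This is handled by picking an arbitrary anchor vertex on the virtual boundary, using its two rotations to fix the relative orientation of $G_\alpha$ and $G_\beta$, and then checking in $O(\deg(v_\alpha))$ time that the induced identifications at the remaining boundary vertices agree; any disagreement is reported as non-planarity via incrementing $c$. Since each splice is charged to the edges of the star it touches, the total overhead across all $\integV$ applications remains linear in the size of the inserted graph, preserving the runtime bounds.
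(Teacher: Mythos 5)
Your proposal follows the same overall strategy as the paper: use wheels to encode rotations when rigids are split, run a linear-time planarity test / SPQR-tree computation on the newly formed component (with wheels standing in for virtual vertices), initialize the rotations of the new rigids from the resulting embeddings, and detect non-planarity at that stage. The running-time analysis inherits the $O(|G_\nu|)$ and $O(\deg(u_1))$ bounds from \Cref{thm:insertGSPQR} and \Cref{cor:merge-spqr-trees} in the same way. The differences are mostly in bookkeeping, and one of them deserves attention.

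First, the paper uses a single boolean flag for \texttt{IsPlanar} rather than a counter of non-planar rigids. Your counter is not wrong, but it is unnecessary: as the paper observes, expansion and merging can never make a non-planar graph planar, so the flag is monotone and never needs to be unset. Second, where you propose incrementally repairing rotations through each $\splitSP$ in \cref{step:cu-spqr}, the paper simply recomputes fresh embeddings for all new rigids in one pass after the SPQR-tree of the wheel-augmented component has been built; this is simpler and still within the $O(|G_\nu|)$ budget.

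The point that matters most is the orientation handling during $\integV$. Your first description, ``reversing $G_\beta$'s rotations globally if necessary,'' would cost $\Theta(|G_\beta|)$ on the large, structurally unmodified rigid $G_\alpha$ or $G_\beta$ and would blow the $O(\deg(u))$ budget; your later anchor-based refinement avoids this, but you should say explicitly that no global reversal is ever performed because the data structure, following the theorem's specification, stores each vertex's rotation only up to reversal. The paper leans on exactly this: since the wheel is a triconnected subgraph with a unique embedding, the planarity test in \cref{step:cu-spqr} is forced to give the matched virtual vertex the same rotation (up to reversal) as its partner on the $G_\alpha$ side, so any non-planarity of the merged graph already manifests as a failure of that test. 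Consequently the ``disagreement reported as non-planarity'' branch you build into the $\integV$ consistency check can never trigger; only the splice-direction decision (determinable by one $O(\deg(v_\alpha))$ comparison of the two hub rotations) is actually needed. Keeping this in mind would also streamline the proof, since it makes clear that $\texttt{IsPlanar}$ needs to be updated only in \cref{step:cu-spqr} and nowhere else. Note that the version of the statement you are proving deliberately provides rotations only up to reversal precisely so that these global flips need not be tracked; the restatable corollary following it is where the paper adds union-find with parity flags to recover consistent orientations, at the cost of an $\alpha(n)$ factor.
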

\begin{proof}
Note that the boolean flag \texttt{IsPlanar} together with the \texttt{Rotation} information can be computed in linear time when creating a new SPQR-tree and that expanding a vertex or merging two SPQR-trees cannot turn a non-planar graph planar.
We make the following changes to the operations \insertGSPQR and \mergeSPQR described in \Cref{thm:insertGSPQR} and \Cref{cor:merge-spqr-trees} to maintain the new information.
After a triconnected component is split in \cref{step:pp-rigid} we now introduce further structure to ensure that the embedding is maintained on both sides.
The occupied edges generated around the split-off vertex $v$ (and those around its copy $v'$) are subdivided and connected cyclically according to \texttt{Rotation}$(v)$.
Instead of ``stars'', we thus now generate occupied ``wheels'' that encode the edge ordering in the embedding of the triconnected component.
When generating the SPQR-tree of the modified subgraph in \cref{step:cu-spqr}, now containing occupied wheels instead of only stars, we also generate a planar embedding for all its triconnected skeletons.
If no planar embedding can be found for at least one skeleton, we report that the resulting instance is non-planar by setting \texttt{IsPlanar} to false.
Otherwise, after performing all splits indicated by the SPQR-tree, we assign \texttt{Rotation} by generating embeddings for all new rigids.
Note that for all skeletons with virtual vertices, the generated embedding will be compatible with the one of the neighboring triconnected component, that is, the rotation of each virtual vertex will line up with that of its matched partner vertex, thanks to the inserted wheel.
Finally, before applying $\integV$ in \cref{step:cu-join}, we contract each occupied wheel into a single vertex to re-obtain occupied stars.
The creation and contraction of wheels adds an overhead that is at most linear in the degree of the expanded vertex and the generation of embeddings for the rigids can be done in time linear in the size of the rigid.
Thus, this does not affect the asymptotic runtime of both operations.
\end{proof}

\begin{restatable}{corollary}{corUnionFind}
The data structure from \Cref{lem:maintain-planar-rotation} can be adapted to also provide the exact rotations with matching direction for every vertex in a rigid.
Furthermore, it can support queries whether two vertices $v_1,v_2$ are connected by at least 3 different vertex-disjoint paths via $\texttt{3Paths}(v_1,v_2)$ in $O((\deg(v_1)+\deg(v_2))\cdot\alpha(n))$ time.
These adaptions change the runtime of $\insertGSPQR$ to $O(\deg(u)\cdot\alpha(n)+|G_\nu|)$, that of $\mergeSPQR$ to $O(\deg(u_1)\cdot\alpha(n))$, and that of $\texttt{Rotation}(u)$ to $O(\alpha(n))$.
\end{restatable}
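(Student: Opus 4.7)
The plan is to augment the data structure of \Cref{lem:maintain-planar-rotation} with two union-find instances: one to propagate exact rotation directions across rigids, and one to test pairs of vertices for membership in a common triconnected component. Each union-find operation takes amortized $O(\alpha(n))$ time, so the additional overhead in \insertGSPQR and \mergeSPQR reduces to showing that the number of union-find operations performed per call is $O(\deg(u))$.

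For the exact directional rotations, I would maintain a union-find over the rigids with XOR-bits, in the style of dynamic bipartiteness testing. Each rigid stores a direction bit relative to the root of its union-find class, and when two rigids are unioned, the bit on the linking edge records whether their local embeddings agree along their shared virtual edge. During \cref{step:cu-spqr,step:cu-join} of \insertGSPQR, each newly created rigid is unioned with its neighbors in $T_\mathcal{S}$; the occupied wheels introduced in \cref{step:pp-rigid} prescribe the rotation of each shared virtual vertex, so the XOR bit on each union is well-defined. A \texttt{Rotation} query then performs a single \texttt{find}, XORs the bits along the compressed path, and composes with the stored representative orientation.

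For $\texttt{3Paths}(v_1,v_2)$, I would use the characterization that $v_1$ and $v_2$ admit three vertex-disjoint paths iff they share a common allocation skeleton that is either a rigid or a bond with at least three edges. I would equip each such skeleton with a unique identifier and store, at every vertex $v$, a doubly linked list of tokens, one for each qualifying allocation, of total length $O(\deg(v))$ by \cond{subgraph}. A query then walks $v_1$'s list marking identifiers, walks $v_2$'s list checking for a collision, and unmarks afterwards, using path-compressed union-find \texttt{find}s to resolve token equivalences; this runs in $O((\deg(v_1)+\deg(v_2))\cdot\alpha(n))$ time.

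The main obstacle is that an expansion can actually destroy rigids: since $G_\nu$ is only required to be biconnected after collapsing its marked vertices, vertex-disjoint paths previously routed through $u$ need not survive inside $G_\nu$, so a rigid containing $u$ may split apart in \cref{step:cu-spqr}. Since union-find does not support set splits, I would eagerly remove the old rigid's tokens from the affected vertices' lists as soon as the rigid is dismantled by \insertGSPQR, and issue fresh tokens for the rigids created afterwards. Each expansion modifies only $O(\deg(u)+|G_\nu|)$ tokens, which together with the $O(\deg(u))$ unions of the rotation structure accounts exactly for the stated additional $\alpha(n)$ factor on \insertGSPQR, \mergeSPQR, and \texttt{Rotation}.
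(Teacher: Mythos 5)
Your rotation mechanism (union-find over rigids with XOR-accumulated reversal bits, with \texttt{Rotation} performing a single \texttt{Find} and \texttt{Union} at each \integV merge) is precisely the construction the paper uses, and the \texttt{3Paths} idea of enumerating the $O(\deg(v_i))$ allocation skeletons and resolving rigid identity via \texttt{Find} also matches the paper; your restriction to rigids and bonds is equivalent to the paper's wording, since the ``virtual edge in a polygon'' case it lists is subsumed by the adjacent skeleton on the other side of that virtual edge.

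The one point on which you genuinely diverge is your ``main obstacle'': the claim that an expansion can destroy rigids, forcing set splits and motivating the token-list machinery. This concern does not arise in the paper's procedure. The rigid allocation skeletons of $u$ are split off in \cref{step:pp-rigid} via \isolV, which leaves each of them \emph{structurally unmodified and still triconnected}; they are never touched by \cref{step:cu-spqr}, which operates only on the single allocation skeleton obtained after \cref{step:pp-join} and \cref{step:m-insert}. In \cref{step:cu-join} each such rigid $G_\alpha$ is merged (via \integV) with exactly one skeleton of the newly decomposed part, and \Cref{lem:integV-sep-pairs} guarantees that this merge introduces no new separation pairs when one side is triconnected, so the result is again a rigid containing $G_\alpha$'s non-virtual vertices. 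The biconnectivity requirement on $G_\nu$ with the marked vertices collapsed is exactly what rules out the path-loss scenario you describe. Hence rigids only ever grow under the algorithm's steps, union-find with only \texttt{Union} and \texttt{Find} suffices, and the eager-removal/token-equivalence layer is dead weight: it is not incorrect, but it neither matches the paper nor is needed for the stated bounds. Dropping it and simply creating a union-find element per skeleton vertex in a rigid (with make-set costs absorbed into $O(|G_\nu|)$, and $O(\deg(u))$ \texttt{Union}/\texttt{Find} calls per expansion) gives exactly the claimed $O(\deg(u)\cdot\alpha(n)+|G_\nu|)$, $O(\deg(u_1)\cdot\alpha(n))$, and $O(\alpha(n))$ bounds.
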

\newcommand{\proofUnionFind}{
\begin{proof}
The exact rotation information for $\texttt{Rotation}$ can be maintained by using union-find to keep track of the rigid a vertex belongs to and synchronizing the reversal of all vertices within one rigid when two rigids are merged by \integV as follows.
We create a union-find set for every vertex in a triconnected component and apply \texttt{Union} to all vertices in the same rigid.
Next to the pointer indicating the representative in the union-find structure, we store a boolean flag indicating
whether the rotation information for the current vertex is reversed with regard to rotation of its direct representative.
To find whether a $\texttt{Rotation}$ needs to be flipped, we accumulate all flags along the path to the actual representative of a vertex by using an exclusive-or.
As $\texttt{Rotation}(u)$ thus relies on the \texttt{Find} operation, its amortized runtime is $O(\alpha(n))$.
When merging two rigids with \integV, we also perform a \texttt{Union} on their respective representatives (which we need to \texttt{Find} first), making $\integV(\mathcal S, (v_\alpha, v_\beta))$ run in $O(\deg(v_\alpha)+\alpha(n))$.
We also compare the $\texttt{Rotation}$ of the replaced vertices and flip the flag stored with the vertex that does not end up as the representative if they do not match.
In total, this makes $\insertGSPQR$ run in $O(\deg(u)\cdot\alpha(n)+|G_\nu|)$ time as there can be up to $\deg(u)$ split rigids.
Furthermore, $\mergeSPQR$ now runs in $O(\deg(u_1)\cdot\alpha(n))$ time.

Maintaining the information in which rigid a skeleton vertex is contained in can then also be used to answer queries whether two arbitrary vertices are connected by three disjoint paths.
This is exactly the case if they are part of the same rigid, appear as poles of the same bond or are connected by a virtual edge in a polygon.
This can be checked by enumerating all allocation skeletons of both vertices, which can be done in time linear in their degree.
As finding each of the skeletons may require a \texttt{Find} call, the total runtime for this is in $O((\deg(v_1)+\deg(v_2))\cdot\alpha(n))$.
\end{proof}
}
\ifthenelse{\boolean{long}}{\proofUnionFind}{}

\ifthenelse{\boolean{long}}{
  \section{Application to Synchronized Planarity}\label{sec:application}
  In this section, we will give some background on the historical development of and further details on the problems \cplan and \pqplan together with summary of the algorithm of Bläsius et al. for solving both problems.
  Furthermore, we will show how our and also previous work on dynamic SPQR-trees can be used in the context of both problems.

  \subsection{Background and Discussion}

\label{sec:background}

Lengauer~\cite{Lengauer1989} first discussed \cplan under a different name in 1989, which is why it was later independently rediscovered by Feng et al.~\cite{Feng1995} in 1995.
Both gave polynomial-time algorithms for the case where the subgraph induced by any cluster is connected.
In contrast, the question whether the general problem with disconnected clusters allows an efficient solution remained open for 30 years.
In that time, polynomial-time algorithms were found for many special-cases~\cite{Angelini2019,Cortese2008,Fulek2015,Gutwenger2002} before Fulek and Tóth~\cite{Fulek2019} found an $O((n+d)^8)$ solution in 2019.
Shortly thereafter, Bläsius et al.~\cite{Blaesius2021} gave a solution with runtime in $O((n+d)^2)$ that also exposes the main concepts needed to solve \cplan.
The solution works via a linear-time reduction to the problem \pqplan, for which Bläsius et al. gave a quadratic algorithm.
We improve the runtime of the latter algorithm.
As \pqplan can be used as a modeling tool for several other constrained planarity problems next to \cplan~\cite{Blaesius2021}, this also improves the time needed for solving any constrained planarity problem that can be solved via a linear-time reduction to \pqplan; see \Cref{tab:constplanprobs}.

In \cplan, the embedding has to respect a laminar family of clusters~\cite{Blaesius2016,Lengauer1989}, that is every vertex is part of some (hierarchically nested) cluster and an edge may only cross a cluster boundary if it connects a vertex from the inside with one from the outside.
In \pqplan, we are given a matching on some of the vertices in the graph and seek an embedding such that the rotations matched vertices line up under a given bijection~\cite{Blaesius2021}.
The synchronization constraints imposed by matching two vertices are also called \emph{pipe}. 
The reduction from the former problem to the latter employs the CD-tree representation of \cplan~\cite{Blaesius2016}, where each cluster is represented as individual skeleton in which adjacent clusters were collapsed into single ``virtual vertices''.
The order of the edges ``leaving'' one cluster via a virtual vertex now needs to line up with the order in which they ``enter'' an adjacent cluster via its corresponding virtual vertex (see also~\cite[Figure 6]{Blaesius2021}).

\begin{figure}[t]
  \centering
  \includegraphics[page=1]{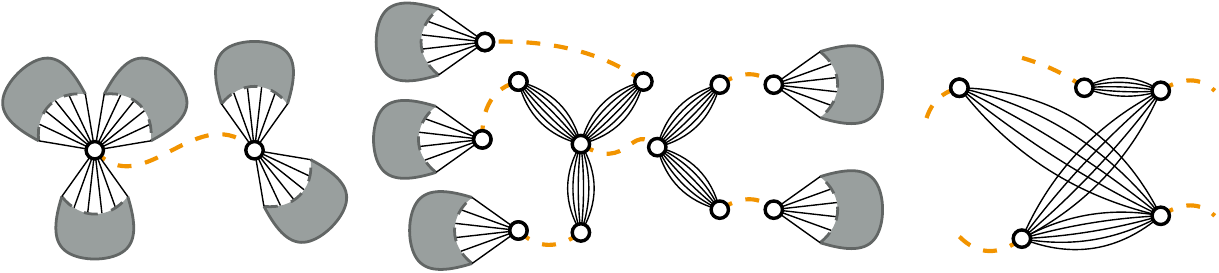}\\[.25cm]
  \includegraphics[page=2]{graphics/syncplan}\\[.2cm]
  \includegraphics[page=3]{graphics/syncplan}
  \caption{
    Schematic representation of the three operations used by Bläsius et al.~\cite{Blaesius2021} for solving \pqplan. Matched vertices are shown as bigger disks, the matching is indicated by the orange dotted lines.
    \textbf{Top:}
      Two cut-vertices matched with each other (left), the result of encapsulating their incident blocks (middle) and the bipartite graph resulting from joining both cut-vertices (right).
    \textbf{Middle:}
      A matched non-cut-vertex with a non-trivial embedding tree (left) that is propagated to replace both the vertex and its partner (right).
    \textbf{Bottom:}
      Three different cases of matched vertices with trivial embedding trees (blue) and how their pipes can be removed or replaced (red).
  }
  \label{fig:syncplan-ops}
\end{figure}

The algorithm for solving \pqplan works by removing an arbitrary pipe each step, using one of three operations depending on the graphs around the matched vertices, see \Cref{fig:syncplan-ops}.
\begin{description}
\item[\contract]
If both vertices of the pipe are cut-vertices, they are ``encapsulated'' by taking a copy of their respective components and then collapsing each incident block to a single vertex to obtain stars with matched centers that have multiple parallel edges connecting them to their ray vertices.
The original cut-vertices are split up so that each incident block gets its own copy and these copies are synchronized with the respective vertex representing a collapsed block.
Now the cut-vertices can be removed by ``joining'' both stars, that is identifying their incident edges according to the bijection that is given alongside the matching.
\item[\propagate]
If one of the vertices is not a cut-vertex and has an embedding tree that not only consists of a single P-node, two copies of this embedding tree are inserted (``propagated'') in place of both matched vertices, respectively.
The inner nodes of the embedding trees are synchronized by matching corresponding vertices.
\item[\simplify]
In the remaining case, one of the vertices is not a cut-vertex but has a trivial embedding tree, i.e., only appears in a single parallel skeleton and no rigid skeleton in the SPQR-tree.
If the vertex (or, more precisely, the parallel that completely defines it rotation) can respect arbitrary rotations, we can simply remove the pipe.
The only exception to this is when the other pole of the parallel is also matched, in which case we can ``short-circuit'' the matching across the parallel.
\end{description}

To summarize, every operation removes a pipe from the matching, while potentially introducing new pipes with vertices that have a smaller degree.
Using a potential function, it can be shown that the progress made by the removal always dominates overhead of the newly-introduced pipes, and that the operations needed to remove all pipes is limited by the total degree of all matched vertices.
Furthermore, the resulting instance without pipes can be solved in linear time.
All of the three operations run in time linear in the degree of the un-matched vertices if the embedding trees they depend on are available.
The contribution of this paper is to efficiently provide the embedding trees, which would require processing entire connected components at each step when done naïvely.
Using the fully-dynamic SPQR-tree by Holm and Rotenberg~\cite{Holm2020p,Holm2020}, this can be achieved with a poly-log cost of $O(\Delta \cdot \log^3 n)$ leading to an overall runtime of $O(m \cdot \Delta \cdot \log^3 n)$.
Using the node expansion from this paper, we can improve the runtime from spending time linear in the size of the input instance ($O(m)$) for each of the linearly many operations, to only spending time linear in the maximum degree ($O(\Delta)$) on each operation.
The reduction from \cplan creates an instance of size $O(n+d)$ in which the total degree of matched vertices is in $O(d)$, corresponding to the total number of times an edge crosses a cluster boundary.
Note that, while this means that $O(d)$ operations are sufficient to reach a reduced instance, the number of crossings between edges and cluster boundaries can be quadratic in the number of vertices in a planar graph.
We also note that while the improvement over using the Holm and Rotenberg approach is only poly-logarithmic, our datastructure has the additional benefit of being conceptually simpler and thus also more likely to improve performance in practice.

  \subsection{Using Node Expansion for Solving Synchronized Planarity}
}{
  \section{Application to Synchronized Planarity}\label{sec:application}
}

We show how {\eskeldec}s and their dynamic operation \insertGSPQR can be used to improve the runtime of the algorithm for solving \pqplan by Bläsius et al.~\cite{Blaesius2021} from $O(m^2)$ to $O(m\cdot\Delta)$, where $\Delta$ is the maximum pipe degree.
\ifthenelse{\boolean{long}}{As already explained in the previous section, the}{The} algorithm spends a major part of its runtime on computing so-called embedding trees, which describe all possible rotations of a single vertex in a planar graph and are used to communicate embedding restrictions between vertices with synchronized rotation.
Once the embedding trees are available, the at most $O(m)$ executed operations run in time linear in the degree of the pipe/vertex they are applied on, that is in $O(\Delta)$~\cite{Blaesius2021}.
Thus, being able to generate these embedding trees efficiently by maintaining the SPQR-trees they are derived from is our main contribution towards the speedup of the \pqplan algorithm.
\ifthenelse{\boolean{long}}{}{%
  See \Cref{sec:background} for more details on the problems \textsc{Syn}\-\textsc{chro}\-\textsc{nized} and \cplan and their solution.
  There, we also give a short overview over the operations Bläsius et al.~\cite{Blaesius2021} use for solving \pqplan, which we improve in the proof of \Cref{thm:SyncPlan}.
}

\label{sec:embed-trees}

An \emph{embedding tree} $\mathcal T_v$ for a vertex $v$ of a biconnected graph $G$ describes the possible cyclic orderings or \emph{rotations} of the edges incident to $v$ in all planar embeddings of $G$~\cite{Booth1976}.
The leaves of $\mathcal T_v$ are the edges incident to $v$, while its inner nodes are partitioned into two categories:
\emph{Q-nodes} define an up-to-reversal fixed rotation of their incident tree edges, while \emph{P-nodes} allow arbitrary rotation; see \Cref{fig:embedding-tree}.
To generate the embedding tree we use the observation about the relationship of SPQR-trees and embedding trees described by Bläsius and Rutter~\cite[Section 2.5]{Blaesius2011}:
there is a bijection between the P- and Q-nodes in the embedding tree of $v$ and the bond and triconnected allocation skeletons of $v$ in the SPQR-tree of $G$, respectively.
\ifthenelse{\boolean{long}}{}{Note that the detailed constructions for the following statements are given in the respective proofs the appendix.}

\begin{restatable}{lemma}{lemEmbedTrees}\label{lem:embedTrees}
Let $\mathcal S$ be an SPQR-tree with a planar represented graph $G_{\mathcal S}$.
The embedding tree for a vertex $v\in G_{\mathcal S}$ can be found in time $O(\deg(v))$.
\end{restatable}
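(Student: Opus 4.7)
The plan is to build the embedding tree by traversing the subtree of allocation skeletons of $v$ in $T_{\mathcal S}$ and translating each allocation skeleton into the corresponding node type of the embedding tree, relying on the bijection between P-/Q-nodes and bond/triconnected allocation skeletons mentioned above.

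First I would locate all allocation skeletons of $v$. By condition \cond{subgraph} these form a connected subtree $T_v$ of $T_{\mathcal S}$, which we traverse starting from an arbitrary allocation skeleton. For each visited skeleton $G_\mu$ I would create embedding-tree structure from the unique allocation vertex $v'\in G_\mu$ (unique by condition \cond{orig-inj}) as follows: if $G_\mu$ is a bond, emit a P-node whose children correspond to the edges incident to $v'$; if $G_\mu$ is a rigid, emit a Q-node whose children are ordered according to $\texttt{Rotation}(v')$ (available in $O(1)$ time by \Cref{lem:maintain-planar-rotation}); if $G_\mu$ is a polygon, emit no node at all, since $v'$ has exactly two incident edges in the cycle and the rotation information at $v$ carries no constraint beyond what neighboring allocation skeletons contribute. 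Each real edge incident to $v'$ becomes a leaf of the embedding tree, labelled by $\origE$, whereas each virtual edge at $v'$ is replaced by recursively processing the adjacent allocation skeleton on the other side of that tree edge in $T_v$.

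For the running-time analysis, the key observation is that the total work is proportional to the sum of the degrees of the allocation vertices of $v$ in all allocation skeletons. Every real edge at any allocation vertex accounts for one edge incident to $v$ in $G_{\mathcal S}$ (via $\origE$), while every virtual edge at an allocation vertex corresponds, by construction, to a tree edge of $T_v$ and therefore contributes at most twice (once from each endpoint). Since $T_v$ is a subtree with at most $\deg(v)$ leaves (each leaf must contain at least one real edge at its allocation vertex in order to be maximal in $T_v$, using conditions \cond{subgraph} and \cond{orig-virt}), both $|T_v|$ and the total degree of the allocation vertices are in $O(\deg(v))$. The traversal, including creation of the P-/Q-nodes and assignment of children, therefore runs in $O(\deg(v))$ time.

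The main obstacle I anticipate is bounding $|T_v|$ and the summed degrees of allocation vertices by $O(\deg(v))$ cleanly: one has to argue that an allocation skeleton can only occur in $T_v$ because it is forced to, i.e., that leaves of $T_v$ contain at least one real edge at their allocation vertex of $v$, and that internal skeletons contribute at least one distinct virtual edge per branching. This follows from the SPQR-tree axioms (in particular \cond{tree} together with the fact that two adjacent allocation skeletons cannot both be bonds or both be polygons, as required by \Cref{def:spqrt}), but the bookkeeping needs to be done carefully. Once this counting argument is in place, the construction itself is straightforward and produces the desired embedding tree in the claimed time.
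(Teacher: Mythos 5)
Your proposal is correct and follows essentially the same approach as the paper: both traverse the connected subtree of allocation skeletons of $v$ starting from a stored allocation vertex, emit a P-node per bond and a Q-node (ordered via \texttt{Rotation}) per rigid, skip polygons, attach leaves at real edges via $\origE$, and recurse along virtual edges, using the same $O(\deg(v))$ bound on the subtree size and the total degree of the allocation vertices. The counting argument you flag as needing care is also stated rather tersely in the paper; your observation that \cond{orig-virt} forces every virtual edge at an allocation vertex of $v$ to lead to another allocation skeleton of $v$ is exactly the fact that makes it work.
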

\newcommand{\proofEmbedTrees}{
\begin{proof}
We use the rotation information from \Cref{lem:maintain-planar-rotation} and furthermore maintain an (arbitrary) allocation vertex for each vertex in $G_{\mathcal S}$.
To compute the embedding tree of a vertex $v$ starting at the allocation vertex $u$ of $v$, we will explore the SPQR-tree by using $\twinE$ on one of the edges incident to $u$ and then finding the next allocation vertex of $v$ as one endpoint of the obtained edge.
If $u$ has degree 2, it is part of a polygon skeleton that does not induce a node in the embedding tree.
We thus move on to its neighboring allocation skeletons and will also similarly skip over any other polygon skeleton we encounter.
If $u$ has degree 3 or greater, we inspect two arbitrary incident edges:
if they lead to the same vertex, $u$ is the pole of a bond, and we generate a P-node.
Otherwise it is part of a triconnected component, and we generate a Q-node.
We now iterate over the edges incident to $u$, in the case of a triconnected component using the order given by the rotation of $u$.
For each real edge, we attach a corresponding leaf to the newly generated node.
The graph edge corresponding to the leaf can be obtained from $\origE$.
For each virtual edge, we recurse on the respective neighboring skeleton and attach the recursively generated node to the current node.
As $u$ can only be part of $\deg(u)$ many skeletons, which form a subtree of $T_\mathcal{S}$, and the allocation vertices of $u$ in total only have $O(\deg(u))$ many virtual and real edges incident, this procedure yields the embedding tree of $u$ in time linear in its degree.
\end{proof}
}
\ifthenelse{\boolean{long}}{\proofEmbedTrees}{}


Our data structure can now be used to reduce the runtime of solving \pqplan by generating an SPQR-tree upfront, maintaining it throughout all applied operations, and deriving any needed embedding tree from the SPQR-tree.
\begin{restatable}{theorem}{thmSyncPlan}\label{thm:SyncPlan}
\pqplan can be solved in time in $O(m\cdot\Delta)$, where $m$ is the number of edges and $\Delta$ is the maximum degree of a pipe.
\end{restatable}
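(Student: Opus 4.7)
The plan is to combine the amortized analysis of the Bläsius et al. algorithm with our dynamic SPQR-tree data structure. The algorithm of~\cite{Blaesius2021} performs at most $O(m)$ operations of types \contract, \propagate, and \simplify in total, and each such operation intrinsically does work linear in the degree $\Delta$ of the pipe it processes, \emph{provided the embedding tree at each matched vertex is already available}. So the whole argument reduces to showing that we can maintain, throughout the execution, SPQR-trees of all biconnected components from which the embedding tree of a queried vertex can be extracted in $O(\deg(v))$ time, and that updating these SPQR-trees under the three operations costs $O(\Delta)$ per step.

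First I would initialize an SPQR-tree for every block of the input in total $O(m)$ time and augment it with the rotation and allocation-vertex information of \Cref{lem:maintain-planar-rotation}. By \Cref{lem:embedTrees}, this gives us the embedding tree of any vertex $v$ in $O(\deg(v))$ time, and the \texttt{IsPlanar} flag tells us whether the current state is still planar. Next I would reinterpret each of the three pipe-removing operations as a short sequence of \insertGSPQR and \mergeSPQR calls on these SPQR-trees:
\begin{itemize}
\item \simplify either deletes the pipe (no SPQR update needed), or short-circuits the matching across a parallel (only the matching data changes);
\item \propagate inserts copies of a non-trivial embedding tree in place of the two matched vertices; this is a node expansion into a graph of size $O(\Delta)$, so one \insertGSPQR per side suffices;
\item \contract performs the encapsulation of incident blocks into stars and then joins the two stars; the encapsulation step is an \insertGSPQR replacing the cut-vertex with a star-like graph of size $O(\Delta)$, and the join step across two SPQR-trees is exactly \mergeSPQR from \Cref{cor:merge-spqr-trees}.
\end{itemize}
Each of these sub-steps runs in $O(\Delta)$ by \Cref{lem:maintain-planar-rotation} and \Cref{cor:merge-spqr-trees}, and the newly introduced matched vertices on the replacement side inherit the degrees required by the potential-function argument of~\cite{Blaesius2021}.

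For correctness I would rely directly on the invariants of the Bläsius et al. algorithm: after each operation the represented graph is exactly the graph their algorithm would obtain, and the SPQR-tree is recomputed by our data structure to match. Planarity is monitored through the \texttt{IsPlanar} flag, which, by \Cref{lem:maintain-planar-rotation}, can turn from true to false only during an \insertGSPQR or \mergeSPQR call and is therefore detected immediately. Finally I would invoke the amortized bound from~\cite{Blaesius2021} showing that the total number of operations, summed over their individual pipe degrees, is $O(m)$, so that the overall runtime is $\sum O(\Delta) = O(m\cdot\Delta)$; after all pipes have been removed, the remaining pipe-free instance can be resolved in linear time as in~\cite{Blaesius2021}.

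The main obstacle I anticipate is the bookkeeping for \contract. The encapsulation step both duplicates pieces of the graph and splits the cut-vertex into one copy per incident block, so one has to be careful that the shape of the star inserted by \insertGSPQR and the subsequent \mergeSPQR across the two sides faithfully realizes the ``join'' described in~\cite{Blaesius2021}, while preserving the bijection on incident edges used for synchronization. Once this translation is pinned down and matched against the interface of \insertGSPQR/\mergeSPQR, the rest is a direct combination of the pre-existing amortized bound with our $O(\Delta)$ per-operation cost.
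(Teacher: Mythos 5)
Your high-level plan matches the paper's: maintain an SPQR-tree per block, derive embedding trees on demand via \Cref{lem:embedTrees}, and realize each of the three pipe-removing operations as $O(\Delta)$-time SPQR-tree updates using \insertGSPQR and \mergeSPQR. However, there is one concrete gap in your treatment of \propagate. You claim that ``one \insertGSPQR per side suffices,'' but SPQR-trees (and hence \insertGSPQR) are defined per biconnected component, whereas \propagate may target a \emph{cut-vertex}. A cut-vertex lives in several blocks and therefore in several SPQR-trees, and inserting the embedding tree in its place can cause some of those previously distinct blocks to merge into a single biconnected component. A single \insertGSPQR call on one SPQR-tree cannot model this. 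The paper handles this explicitly: it partitions the incident blocks into equivalence classes (two blocks fall together if their induced subtrees of the embedding tree share at least two nodes, or share a C-node of degree $\geq 2$ in both), constructs a small biconnected skeleton graph for each class, computes its SPQR-tree, and then stitches the SPQR-trees of the individual blocks into it via \Cref{cor:merge-spqr-trees}. Without something like this, correctness (the invariant that the maintained SPQR-trees exactly reflect the current blocks) fails after a cut-vertex propagation.

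Your description of \contract is also subtly off, though more easily repaired. Encapsulation does not expand the existing cut-vertex in place; it takes a \emph{copy} of the incident blocks, collapses each to a single vertex, and thereby produces a brand-new bipartite component of size $O(\Delta)$. Since this component is fresh, the paper simply builds its SPQR-tree from scratch in $O(\Delta)$ time rather than routing through \insertGSPQR and \mergeSPQR; your reading still gives the right time bound, but the ``encapsulation = \insertGSPQR on the existing graph'' framing is not literally correct because \insertGSPQR modifies a block in place and would not leave the original blocks intact. The remainder of your argument (initialization cost, amortized $O(m)$ operation count from the potential function of~\cite{Blaesius2021}, $O(\Delta)$ per operation, and the final linear-time resolution of the pipe-free instance) matches the paper.
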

\newcommand{\proofSyncPlan}{
\begin{proof}
The algorithm works by splitting the pipes representing synchronization constraints until they are small enough to be trivial.
It does so by exhaustively applying the three operations \contract, \propagate and \simplify depending on the graph structure around the pairs of synchronized vertices.
As mentioned by Bläsius et al., all operations run in time linear in the degree of the pipe they are applied on if the used embedding trees are known, and $O(m)$ operations are sufficient to solve a given instance~\cite{Blaesius2021}.
Our modification is that we maintain an SPQR-tree for each biconnected component and then generate the needed embedding trees on-demand in linear time using \Cref{lem:embedTrees}.
See \Cref{sec:background} for more background on the \pqplan operations modified in the following.

Operation \simplify can be applied if the graph around a synchronized vertex $v$ allows arbitrary rotations of $v$, that is the embedding tree of $v$ is trivial.
In this case, the pipe can be removed without modifying the graph structure.
Thus, we can now easily check the preconditions of this operations without making any changes to the SPQR-tree.

\propagate takes the non-trivial embedding tree of one synchronized vertex $v$ and inserts copies of the tree in place of $v$ and its partner, respectively.
Synchronization constraints on the inner vertices of the inserted trees are used to ensure that they are embedded in the same way.
We use \insertGSPQR to also insert the embedding tree into the respective SPQR trees, representing Q-nodes using wheels.
When propagating into a cutvertex we also need to check whether two or more incident blocks merge.
We form equivalence classes on the incident blocks, where two blocks are in the same class if
1) the two subtrees induced by their respective edges share at least two nodes
2) both induced subtrees share a C-node that has degree at least 2 in both subtrees.
Blocks in the same equivalence class will end up in the same biconnected component as follows:
We construct the subtree induced by all edges in the equivalence class and add a single further node for each block in the class, connecting all leaves to the node of the block the edges they represent lead to.
We calculate the SPQR-tree for this biconnected graph and then merge the SPQR-trees of the individual blocks into it by applying \Cref{cor:merge-spqr-trees}.
As \insertGSPQR (and similarly all \mergeSPQR applications) runs in time linear in the size of the inserted PQ-tree, which is limited by the degree of the vertex it represents, this does not negatively impact the running time of the operation.

Operation \contract generates a new bipartite component representing how the edges of the blocks incident to two synchronized cutvertices are matched with each other. 
The size of this component is linear in the degree of the synchronized vertices.
Thus, we can freshly compute the SPQR-tree for the generated component in linear time, which also does not negatively impact the running time.

Furthermore, as we now no longer need to iterate over whole connected components to generate the embedding trees, we are also no longer required to ensure those components do not grow to big.
We can thus also directly contract pipes between two distinct biconnected components using \Cref{cor:merge-spqr-trees} instead of having to insert PQ-trees using \propagate.
This may improve the practical runtime, as \propagate might require further operations to clean up the generated pipes, while the direct contraction entirely removes a pipe without generating new ones.
\end{proof}
}
\ifthenelse{\boolean{long}}{\proofSyncPlan}{
\begin{proof}[Proof Sketch.]
See \Cref{sec:background} for more background on the \pqplan operations modified in the following.
Operation \propagate expands a vertex into a tree corresponding to the embedding tree of its partner vertex with synchronized rotation.
This expansion can also be done in the SPQR-tree without a runtime overhead, while some care needs to be taken when expanding cut-vertices, as different parts of the tree need to be expanded in different blocks.
Operation \contract generates a new bipartite component linear in size to the pipe it removes.
Thus, the SPQR-tree for this new component can be computed without a runtime overhead.
All other operations do not affect the SPQR-tree and once embedding trees are available, of the at most $O(m)$ applied operations, each takes $O(\Delta)$ time~\cite{Blaesius2021}.
\end{proof}
}

\begin{restatable}{corollary}{corCPlan}
\cplan can be solved in time in $O(n+d\cdot \Delta)$, where $d$ is the total number of crossings between cluster borders and edges and $\Delta$ is the maximum number of edge crossings on a single cluster border.
\end{restatable}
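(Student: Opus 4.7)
The plan is to apply the linear-time reduction from \cplan to \pqplan via the CD-tree representation of Bläsius et al.~\cite{Blaesius2021,Blaesius2016} and then invoke the improved \pqplan algorithm of \Cref{thm:SyncPlan}. The reduction produces a \pqplan instance whose underlying graph has total size $O(n+d)$, and the sum of degrees over all matched vertices is $O(d)$, since each edge-cluster crossing contributes a bounded amount to the matched-vertex degree in the corresponding virtual vertices of the CD-tree. Moreover, by construction of the CD-tree, the maximum pipe degree in the resulting instance is exactly $\Delta$, the maximum number of edges crossing a single cluster border.

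Next I would track the runtime of \Cref{thm:SyncPlan} more tightly than the generic $O(m\cdot\Delta)$ bound by separating linear preprocessing from the pipe-removal work. Building an SPQR-tree for each biconnected component and performing the reduction itself takes $O(n+d)$ time. The potential-function argument of Bläsius et al.~\cite{Blaesius2021} bounds the total number of applied operations \contract, \propagate, and \simplify by the total degree of all matched vertices, which here is $O(d)$ rather than $O(m)=O(n+d)$. Each such operation, using the on-demand embedding trees from \Cref{lem:embedTrees} together with \insertGSPQR and \mergeSPQR (\Cref{cor:merge-spqr-trees}), runs in time linear in the degree of the pipe it removes, i.e.\ in $O(\Delta)$. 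Hence the pipe-removal phase contributes only $O(d\cdot\Delta)$, and the final pipe-free instance can be tested for planarity in $O(n+d)$ additional time.

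The main delicate point is to verify that the $n+d$ term truly appears only additively. Concretely, one must check that no operation spends work proportional to the size of an entire biconnected component: this is precisely what our dynamic data structure buys us, since all SPQR-tree updates and embedding-tree extractions along a pipe are local, charging their cost against the pipe's degree rather than its containing component. Once this is confirmed, summing the $O(n+d)$ preprocessing, the $O(d\cdot\Delta)$ pipe-removal phase, and the $O(n+d)$ final planarity test yields the claimed bound of $O(n + d\cdot\Delta)$.
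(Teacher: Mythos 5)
Your proposal follows the same route as the paper: reduce \cplan to \pqplan via the CD-tree, observe that the potential (and hence the number of pipe-removal operations) is $O(d)$ and the per-operation cost is $O(\Delta)$, and charge the remaining preprocessing and final planarity test to an additive $O(n+d)$ term that is absorbed into $O(n+d\cdot\Delta)$. This matches the paper's proof; the only cosmetic difference is that the paper explicitly invokes planarity to note that building the SPQR-trees costs only $O(n)$ when parallel edges are ignored, whereas you bound the preprocessing more loosely by $O(n+d)$, which still yields the claimed bound.
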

\newcommand{\proofCPlan}{
\begin{proof}
Note that for a graph not containing parallel edges to be planar, the number of edges has to be linear in the number of vertices.
We apply the reduction from \cplan to \pqplan as described by Bläsius et al.~\cite{Blaesius2021}.
Ignoring the parallel edges generated by the CD-tree, we can generate an SPQR-tree for every component of the resulting instance in $O(n)$ time in total.
The instance contains one pipe for every cluster boundary, where the degree of a pipe corresponds to the number of edges crossing the respective cluster boundary.
Thus, the potential described by Bläsius et al.~\cite{Blaesius2021}, which sums up the degrees of all pipes with a constant factor depending on the endpoints of each pipe, is in $O(d)$.
Each operation applied when solving the \pqplan instance runs in time $O(\Delta)$ (the maximum degree of a pipe) and reduces the potential by at least 1.
Thus, a reduced instance without pipes, which can be solved in linear time, can be reached in $O(d\cdot\Delta)$ time.
\end{proof}
}
\ifthenelse{\boolean{long}}{\proofCPlan}{}



\bibliography{bibliography}

\ifthenelse{\boolean{long}}{}{
\newpage
\appendix
\input{appendix}

\section{Background and Discussion}
In this section, we will give some background on the historical development of and further details on the problems \cplan and \pqplan together with summary of the algorithm of Bläsius et al. for solving both problems.
Furthermore, we will discuss how our and also previous work on dynamic SPQR-trees can be used in the context of both problems.

}

\end{document}